\documentclass[a4paper,UKenglish,cleveref, autoref, thm-restate]{lipics-v2021}

\pdfoutput=1 
\hideLIPIcs  

\title{Exact Subquadratic Algorithm for Many-to-Many Matching on Planar Point Sets with Integer Coordinates} 

\titlerunning{Exact Subquadratic Algorithm for Many-to-Many Matching} 

\author{Seongbin Park}{Department of Computer Science and Engineering, POSTECH, Pohang, Republic of Korea}{seongbin.park@postech.ac.kr}{https://orcid.org/0009-0000-9018-5798}{}

\author{Eunjin Oh}{Department of Computer Science and Engineering, POSTECH, Pohang, Republic of Korea \and \url{https://sites.google.com/view/eunjinoh/}}{eunjin.oh@postech.ac.kr}{https://orcid.org/0000-0003-0798-2580}{}

\authorrunning{S. Park and E. Oh} 

\Copyright{Seongbin Park and Eunjin Oh} 

\ccsdesc[100]{Theory of computation~Design and analysis of algorithms} 
\ccsdesc[100]{Theory of computation~Computational geometry}

\keywords{Edge cover, many-to-many matching, similarity, geometric matching} 

\funding{Supported by Institute of Information \& Communications Technology Planning \& Evaluation (IITP) grant funded by the Korea government (MSIT) (No.RS-2024-00440239, Sublinear Scalable Algorithms for Large-Scale Data Analysis) and the National Research Foundation of Korea (NRF) grant funded by the Korea government (MSIT) (No.RS-2024-00358505).}

\nolinenumbers 

\EventEditors{John Q. Open and Joan R. Access}
\EventNoEds{2}
\EventLongTitle{42nd Conference on Very Important Topics (CVIT 2016)}
\EventShortTitle{CVIT 2016}
\EventAcronym{CVIT}
\EventYear{2016}
\EventDate{December 24--27, 2016}
\EventLocation{Little Whinging, United Kingdom}
\EventLogo{}
\SeriesVolume{42}
\ArticleNo{23}

\DeclareMathOperator{\polylog}{polylog}
\newcommand{\cand}{\ensuremath{\tilde{G}_{\textsf{cand}}}}

\begin{document}

\maketitle

\begin{abstract}
In this paper, we study the many-to-many matching problem on planar point sets with integer coordinates: 
Given two disjoint sets $R,B \subset [\Delta]^2$ with $|R|+|B|=n$, the goal is to select a set of edges between $R$ and $B$ so that every point is incident to at least one edge and the total Euclidean length is minimized.
In the general case that $R$ and $B$ are point sets in the plane, the best-known algorithm for the many-to-many matching problem takes $\tilde{O}(n^2)$ time. 
We present an exact $\tilde{O}(n^{1.5} \log \Delta)$ time algorithm for point sets in $[\Delta]^2$. 
To the best of our knowledge, this is the first subquadratic exact algorithm for planar many-to-many matching under bounded integer coordinates.
\end{abstract}

\section{Introduction}
\label{sec:introduction}
Measuring the similarity between two point sets is a fundamental problem in computational geometry and pattern recognition~\cite{alt1988congruence,unnikrishnan2005measures,veltkamp2001state}. A widely used measure is the \textit{Earth Mover's Distance} (EMD), which captures the global distribution of point sets by finding an optimal correspondence~\cite{lv2004image,mumford1991mathematical}. Formally, for two sets $R$ and $B$ of the same cardinality, the EMD is defined as the minimum cost of a perfect matching: $\min_{\sigma: R \to B} \sum_{r \in R} \|r - \sigma(r)\|$, where $\sigma$ is a bijection. While EMD is effective at preserving the holistic structure of the sets, its definition is inherently restricted to cases where $|R| = |B|$.
To accommodate sets of different cardinalities, one might consider \textit{pointwise} heuristics such as the Chamfer distance: $\sum_{r \in R} \min_{b \in B} \|r-b\| + \sum_{b \in B} \min_{r \in R} \|b-r\|$. 
It provides a simple way to handle unequal set sizes as 
it treats each connection independently and sums $|R| + |B|$ distances without any global coordination. 
However, this often fails to capture the underlying structural correspondence, as it lacks the ``assignment'' nature that makes EMD robust.

A more principled approach to handling different sized sets, instead of resorting to pointwise measures, is to generalize the matching framework of EMD itself. By relaxing the strict one-to-one matching requirement to a \textit{many-to-many} correspondence, we arrive at the \textit{minimum link measure} introduced by~\cite{DBLP:journals/acta/EiterM97}. Formally, a many-to-many matching is a set of edges $M \subseteq R \times B$ such that every point in $R \cup B$ is incident to at least one edge in $M$. The cost is the total weight $\sum_{(r,b) \in M} \|r-b\|$. 
The minimum link measure is the minimum cost of a many-to-many matching between $R$ and $B$. 
This ``setwise'' approach preserves the global structural integrity of EMD by seeking an optimal set of edges that covers both sets simultaneously. As a single edge in $M$ can satisfy the coverage requirement for points in both $R$ and $B$, the minimum link measure maintains a mutual correspondence that is more stable and geometrically intuitive than independent pointwise distances.

In this paper, we consider the problem of computing a minimum-cost many-to-many matching, also called the minimum link measure, between two point sets in Euclidean space. This problem was originally introduced by Eiter and Mannila~\cite{DBLP:journals/acta/EiterM97}. While an efficient, optimal algorithm exists for points on a line, the problem becomes significantly more difficult in higher dimensions. Even for planar point sets, the best-known exact algorithm still suffers from a quadratic bottleneck~\cite{DBLP:conf/isaac/BandyapadhyayMS21}, mirroring the challenges found in various other geometric bipartite matching problems~\cite{gattani2023robust}. To bridge this gap, we consider the case where the inputs lie on an \emph{integer grid} as an intermediate step toward the general planar case.

\subparagraph*{Our results.}
In this paper, we present the first subquadratic-time exact algorithm for the many-to-many matching problem for point sets on an integer grid. Our main result is summarized in the following theorem.

\begin{theorem}\label{thm:main}
Given two disjoint sets\footnote{With a slight modification, we can deal with the non-disjoint case without increasing the running time.} $R$ and $B$ of points in $[\Delta]^2=\{1,2,\ldots,\Delta\}\times \{1,2,\ldots,\Delta\}$, we can compute a minimum-cost many-to-many matching between $R$ and $B$ in $\tilde O(n^{1.5}\log\Delta)$ time,\footnote{The $\tilde O$-notation hides polylogarithmic factors in $n$.} where $n=|R|+|B|$.
\end{theorem}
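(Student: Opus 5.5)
We reduce the minimum-cost many-to-many matching (a minimum-weight edge cover of the complete bipartite graph $R\times B$) to a minimum-cost perfect matching, and then solve that matching problem with a scaling primal-dual algorithm in which every Hungarian search and augmentation is implemented with geometric data structures. The reduction is classical. For each point $p\in R\cup B$ let $d(p)=\min_{q}\|p-q\|$ be the distance to its nearest point of the opposite color. A minimum edge cover of weight $w$ has the same cost as a minimum-cost perfect matching in the graph obtained by taking two copies of the bipartite graph and joining each point to its copy with weight $2d(p)$. Equivalently, one computes a minimum-weight matching $M$ in $R\times B$ under the reduced costs $c'(r,b)=\|r-b\|-d(r)-d(b)$, keeping only edges with negative reduced cost. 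The optimal cover is $M$ together with the nearest-neighbor edge of every point left unmatched by $M$. All nearest-neighbor distances $d(\cdot)$ are computed in $O(n\log n)$ time with Voronoi diagrams.

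The main work is to compute a minimum-weight (not necessarily perfect) matching under $c'$ in subquadratic time. I would use a Gabow--Tarjan style cost-scaling framework with $O(\log(n\Delta))$ scales. The integer grid makes this possible: the costs are irrational, but distances lie in $[1,\sqrt2\Delta]$, so rounding them to multiples of $1/\mathrm{poly}(n\Delta)$ preserves optimality. Strictly, exact optimality needs a separation argument. Two distinct sums of $n$ square roots of integers bounded by $\Delta^2$ differ by at least some positive amount. That bound is not polynomial in general, so I would instead finish with a final exact refinement on the admissible graph left after scaling, or argue that $O(\log\Delta)$ scales followed by an exact Hungarian phase on an $\tilde O(n)$-size candidate edge set $\cand$ suffices.

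Within each scale, the algorithm runs $O(\sqrt n)$ phases. Each phase performs a Hungarian search (a Dijkstra over the bipartite graph with dual-adjusted weights) followed by a maximal set of vertex-disjoint augmenting paths, as in Hopcroft--Karp. The Dijkstra step needs repeated queries of the form ``minimize $\|r-b\|-y(b)-d(b)$ over $b$'', which is additively weighted nearest-neighbor search. Dynamic additively weighted Voronoi structures with polylogarithmic update and query time let each phase run in $\tilde O(n)$ time, so each scale costs $\tilde O(n^{1.5})$ and the total is $\tilde O(n^{1.5}\log\Delta)$.

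The main obstacle is the phase bound. One must show that after $O(\sqrt n)$ phases only $O(\sqrt n)$ deficiency remains, and that the remaining $O(\sqrt n)$ augmentations each cost $\tilde O(n)$. This requires that the rounded, dual-feasible solution from the previous scale be within an additive $O(n)$ units of optimal at the current scale, which is the standard $1$-feasibility argument. The delicate part is that the reduced costs $c'$ involve the point-dependent terms $d(\cdot)$ and that edges may be left unmatched. I would handle the latter by adding dummy self-loops of cost $0$, so the problem becomes a perfect matching on the doubled graph. I would then verify that the weighted-Voronoi queries still apply there, because the copy edges are only $n$ explicit edges.
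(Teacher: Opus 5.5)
\textbf{Gap: the proposal gives only an approximation; the step to an exact optimum is missing.}

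Your reduction and scaling phase are fine and coincide with the paper's first phase. The reduced costs $c'$ are correct, and the doubled graph with link weights $2d(p)$ gives twice the cover cost, which is harmless. Gabow--Tarjan scaling with $O(\sqrt n)$ phases per scale, using dynamic additively weighted nearest-neighbor structures, produces a $1$-optimal perfect matching and duals in $\tilde O(n^{1.5}\log(n\Delta/\varepsilon))$ time. That is only an additive-$\varepsilon$ approximation. Passing from it to an \emph{exact} optimum is the whole content of the theorem, and you leave it as two unsupported options:
\begin{itemize}
\item Your claim that rounding to multiples of $1/\mathrm{poly}(n\Delta)$ preserves optimality is unjustified. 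As you note yourself, no polynomial separation bound is known for sums of $n$ square roots.
\item An ``exact Hungarian phase'' on a sparse candidate set is not subquadratic. Run from scratch, it needs $\Theta(n)$ augmentations of $\tilde O(n)$ cost each, which is quadratic. Nothing in your argument bounds the number of corrections needed to repair a warm-started $1$-optimal matching.
\item You give no reason why an $\tilde O(n)$-size edge set containing an optimal matching exists or can be found. The edges that are nearly tight for the final duals can number $\Theta(n^2)$ when many points are collinear.
\end{itemize}

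\textbf{What the paper supplies: a structural argument specific to $[\Delta]^2$.}
\begin{itemize}
\item \emph{Eligibility.} With $\theta=1/(n\Delta^{33})$, every upper-layer edge of an optimal matching is eligible, i.e., $y_\theta(u)+y_\theta(v)>\|u-v\|-2/\Delta^{33}$.
\item \emph{Grid crossing bound.} For four non-collinear grid points with $\overline{ab}$ crossing $\overline{cd}$, one has $\|a-d\|+\|b-c\|+\Delta^{-32}<\|a-b\|+\|c-d\|$. This separation bound \emph{is} polynomial because only four square roots are involved.
\item \emph{Planar skeleton.} Consequently, eligible edges never cross unless their endpoints are collinear. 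They are therefore covered by a non-crossing straight-line skeleton $P$ of total complexity $O(n)$.
\item \emph{Collinear points.} Points lying inside a segment of $P$ are handled by a 1D min-cost matching with penalties. This is solvable in $O(n\log n)$ using the convexity of the DP cost functions. Taking the union over the four endpoint configurations trims the candidate graph to $O(n)$ edges while keeping an optimal matching.
\item \emph{Exactness within the time bound.} This comes from a Lipton--Tarjan separator recursion on $P$. At each level, only the $O(\sqrt{n'})$ skeleton separator vertices are reinserted, each by one augmentation costing $O(m\log n)$. Interior separator vertices are handled by the precomputed 1D matchings. The total is $O(n^{1.5}\log n)$.
\end{itemize}
Without this planarity-plus-separator argument, or some substitute for it, your proposal establishes only the approximation result, not the exact $\tilde O(n^{1.5}\log\Delta)$ bound.
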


A key technical challenge in achieving this bound is efficiently handling subproblems where points can be either matched to the opposite set or remain unmatched by paying a certain cost. To this end, we introduce the \textit{minimum-cost matching with penalties} problem: Given two point sets $R$ and $B$ where each point $p \in R \cup B$ is associated with a real-valued penalty $\pi(p) \ge 0$, we seek a matching $M \subseteq R \times B$ that minimizes
$\sum_{(r,b) \in M} \|r-b\| + \sum_{p \in \text{free}(M)} \pi(p)$, 
where $\text{free}(M)$ denotes the set of points in $R \cup B$ not incident to any edge in $M$. 
To the best of our knowledge, this problem has not been explicitly studied before. Although this can be solved in $O(n^3)$ time via standard reductions to the minimum-cost perfect matching problem for bipartite graphs, the cubic-time algorithm is too slow for our purpose. 
As a subroutine for our main algorithm, we provide an optimal algorithm for this penalty variant in one dimension, which we believe is of independent interest, stated as follows.

\begin{theorem}\label{thm:matching-penalty}
Given two disjoint sets $R$ and $B$ of points on a real line, where each point has a real-valued penalty, a minimum-cost matching with penalties can be computed in $O(n\log n)$ time, where $n=|R|+|B|$.
\end{theorem}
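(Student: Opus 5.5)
We sort $R\cup B$ along the line, obtaining $p_1<p_2<\dots<p_n$ (the sort gives the $O(n\log n)$ term), and aim for a linear-time dynamic program or greedy scan after sorting.

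\textbf{Structure.} An uncrossing argument shows that some optimal matching is \emph{non-nested and non-crossing}. If two matched pairs $(r,b)$ and $(r',b')$ have overlapping intervals, we re-pair the four endpoints so that every red is matched to a blue and the total length does not increase. This is always possible on a line by a case analysis on the colour pattern of the four sorted endpoints. The free set is unchanged, so penalties are unaffected.

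A second exchange argument concerns a free point $q$ lying strictly inside the span of a matched pair $(r,b)$. If $q$ has the colour opposite to $r$, we replace $(r,b)$ by $(r,q)$ and pay $\pi(b)$ instead of $\pi(q)$; this move is not always improving. So we do not remove free points inside spans. Instead we rely on the weaker property that matched pairs form non-overlapping intervals; free points may still sit inside them.

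\textbf{First attempt: a DP over the sorted order.} Let $D[i]$ be the optimal cost for $p_1,\dots,p_i$, where every matched pair lies within this prefix. The recurrence is
\[
D[i]=\min\Bigl(D[i-1]+\pi(p_i),\ \min_{j<i,\ \mathrm{col}(p_j)\neq\mathrm{col}(p_i)}\bigl(D[j-1]+|p_i-p_j|+\textstyle\sum_{j<k<i}\pi(p_k)\bigr)\Bigr).
\]
The inner minimisation ranges over all $j$. Using prefix sums $S$ of penalties, the term equals $D[j-1]-p_j-S[j]$ plus $p_i+S[i-1]$. This is a separable expression, so we maintain two running minima, one per colour, of $D[j-1]-p_j-S[j]$, and each step takes $O(1)$.

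The main obstacle is justifying the claim that all points strictly between $p_j$ and $p_i$ are free. Points between them that are matched must form pairs nested inside $(p_j,p_i)$, which the non-nesting lemma excludes. The exchange case analysis must therefore be done carefully, including the same-colour inner pairs. Suppose $j<k<l<i$ with $(p_j,p_i)$ and $(p_k,p_l)$ both matched. Then at least one of the re-pairings $\{(p_j,p_k),(p_l,p_i)\}$ or $\{(p_j,p_l),(p_k,p_i)\}$ is colour-valid, because among the four points exactly two are red and two are blue. If only the crossing re-pairing $(p_j,p_l),(p_k,p_i)$ is valid, its cost equals the nested cost, so we iterate to a non-crossing, non-nested configuration. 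I would prove termination with a potential such as the sum of squared interval lengths, which strictly decreases under these moves.

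The whole computation is then sorting plus an $O(n)$ scan, giving $O(n\log n)$. Reconstructing the matching follows by storing argmins during the scan.
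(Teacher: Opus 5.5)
Your structural lemma is false, and your own case analysis shows it. Suppose the four sorted endpoints $p_j<p_k<p_l<p_i$ have colour pattern red, red, blue, blue, so $p_j$ and $p_k$ share a colour. Then the only colour-valid pairings are the nested one $\{(p_j,p_i),(p_k,p_l)\}$ and the crossing one $\{(p_j,p_l),(p_k,p_i)\}$, and they have equal cost. No disjoint pairing exists. Your squared-length potential drops from nested to crossing, and then the process stops at a crossing configuration. So the step ``iterate to a non-crossing, non-nested configuration'' never happens.

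A concrete instance: reds at $0,1$, blues at $2,3$, all penalties $100$. The optimum is $4$, e.g.\ $(0,2),(1,3)$, and every matching that covers all four points has overlapping intervals. Your recurrence forces every point strictly inside a matched pair to be free, and it evaluates to $D[4]=201$ (match $(1,2)$ and pay for $0$ and $3$). So the $O(n)$ scan computes the wrong quantity.

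The missing idea is that the DP state must carry the signed excess $k$ of points to the left that are still waiting to be matched to the right. Across each gap of length $d$ between consecutive points, $|k|$ edges are stretched. Inside a run of same-coloured points, $|k|$ can exceed $1$, which a single-index DP cannot record. This excess is exactly the paper's $F_x(k)$, with transition $F_p(k)=\min\{F_{p'}(k-1)+|k-1|d,\ F_{p'}(k)+|k|d+w_p\}$ for a red $p$ (symmetric for blue). Evaluating this table directly costs $O(n^2)$.

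Reaching $O(n\log n)$ needs a second idea that is also absent from your proposal: $F_x(\cdot)$ is convex in $k$, because the minimum of $f(k-1)$ and $f(k)+w$ stays convex when $f$ is. Hence one can store only the slopes $\Delta F_x(k)$ in two balanced search trees with lazy offsets.
\begin{itemize}
\item The transport step becomes an offset shift of $+d$ on the positive side and $-d$ on the nonpositive side.
\item The match-or-pay step becomes a single insertion of the slope $-w_p$ at the rank of the largest $k$ with $\Delta f(k)\le -w_p$, found by binary search.
\end{itemize}
This gives $O(\log n)$ per point. Running minima over a one-dimensional index cannot substitute for this excess-plus-convexity structure.
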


\subparagraph*{Related works.}
The study of the minimum-cost many-to-many matching problem was introduced by Eiter and Mannila~\cite{DBLP:journals/acta/EiterM97}. They established that the problem is solvable in $O(n^3)$ time by reducing it to the minimum-cost perfect matching problem on graphs~\cite{DBLP:journals/acta/EiterM97}.  For the one-dimensional case where points lie on a real line, Colannino et al.~\cite{DBLP:journals/gc/ColanninoDHLMRST07} provided an optimal $O(n \log n)$-time algorithm, improving upon previous $O(n^2)$ results~\cite{colannino2005faster}. In the Euclidean plane, the best-known exact algorithm runs in $\tilde{O}(n^2)$ time~\cite{DBLP:conf/isaac/BandyapadhyayMS21}.
On the other hand, in higher dimensions, no non-trivial algorithm is known for this problem while an $O(n^3)$-time algorithm can be obtained easily by reducing it to a minimum-cost perfect matching of a graph. Given the challenge of computing exact solutions in subquadratic time, several approximation algorithms have been explored. The Chamfer distance serves as a simple $2$-approximation for the many-to-many matching cost~\cite{DBLP:conf/isaac/BandyapadhyayMS21}. Furthermore, $(1+\varepsilon)$-approximate solutions can be computed in $(1/\varepsilon)^{O(d)} \cdot n \log n$ time for $d$-dimensional Euclidean space, or any metric space with a constant doubling dimension~\cite{an2025approximation,bandyapadhyay2024n}. 

As EMD is a more classical setting, there are numerous results on computing the EMD between two points (also known as the minimum-cost perfect matching). 
By implementing the Hungarian algorithm~\cite{kuhn1955hungarian} using a dynamic weighted nearest neighbor data structure~\cite{DBLP:journals/dcg/KaplanMRSS20}, 
one can compute the exact EMD of two point sets in the plane in $\tilde O(n^2)$ time~\cite{gattani2023robust}. 
While the exact bipartite matching problem in the plane suffers from a quadratic bottleneck,
the problem becomes easier if there are constraints on the input points. 
For instance, 
if the points in $B$ and $R$ are drawn independently
and identically from a fixed distribution that is not known to the algorithm, the exact EMD between $B$ and $R$ can be computed in 
$O(n^{7/4} \log \Phi)$ time, where $\Phi$ denotes the spread of $B\cup R$ (i.e, the ratio of the maximum distance to the minimum distance).
If $B$ and $R$ come from $[\Delta]^2$, Sharathkumar \cite{DBLP:conf/compgeom/Sharathkumar13} presented an $\tilde O(n^{1.5} \log\Delta)$-time algorithm.\footnote{The paper~\cite{DBLP:conf/compgeom/Sharathkumar13} states that their algorithm takes $O(n^{1.5+\delta} \log (n\Delta))$ time for a constant $\delta$, but by replacing a nearest neighbor data structure used in this paper with the most recent one~\cite{DBLP:journals/dcg/KaplanMRSS20}, the factor $n^\delta$ can be replaced with $\polylog n$.}

\section{Preliminaries and Algorithm Overview}
In this section, we first describe an alternative view for the problem and then provide an overview of our algorithm. 
An alternative way is to view the many-to-many matching problem as a geometric graph problem. Consider a complete bipartite graph $G = ((R,B), E)$ with vertex set $V = R \cup B$ and with edge set $E = R \times B$ where 
the cost of an edge $(r, b) \in E$ is the Euclidean distance $\|r-b\|$ between their endpoints. 
An \emph{edge cover} of $G$ is a subset of $E$ such that every vertex of $V$ is incident to at least one edge of it. 
The cost of an edge cover of $G$ is defined as the sum of the costs of its edges.
Then the many-to-many matching problem on $R$ and $B$ is equivalent to the problem of finding an edge cover of $G$ with the minimum cost.

For a general graph with $n$ vertices and $m$ edges, a minimum-cost edge cover can be solved in $O(n(n\log n +m))$ time via a reduction to the minimum-cost perfect matching~\cite{DBLP:journals/jacm/FredmanT87,  DBLP:journals/talg/Gabow17,DBLP:journals/jct/KeijsperP98}. 
Since the complete bipartite graph $G$ has complexity $\Theta(n^2)$,
we cannot use these graph algorithms directly to obtain a subquadratic-time algorithm. Nevertheless, as our approach extensively adapts several combinatorial principles from these classical algorithms, 
we introduce the following graph-theoretic terminology to be used throughout this paper.

\subparagraph*{Terminology for matching.}
Let $H$ be a bipartite graph with edge cost. We use $V(H)$ to denote the vertex set of $H$. 
A \emph{matching} $M$ in $H$ is a set of pairwise vertex-disjoint edges of $H$. We use $V(M)$ to denote the set of endpoints of the edges of $M$.
A matching in $H$ is \emph{perfect} if every vertex of $H$ is incident to exactly one edge of it.
The cost of a matching is defined as the sum of the costs of its edges.
A vertex of the graph is \emph{free} (with respect to $M$) if no edge of $M$ is incident to it. 
An \emph{alternating path} (with respect to $M$) is a path in the graph that begins from a free vertex and whose edges alternately belong to $M$ and not to $M$. 
An \emph{augmenting path} (with respect to $M$) is an alternating path whose two endpoints are both free. 
Given an augmenting path $P$, we \emph{augment} $M$ along $P$ by taking the symmetric difference of $M$ and $P$, i.e., $M \oplus P$.
As a result, the cardinality of the matching increases by one. 
It is known that $M$ is a maximum-cardinality matching of $G$
if no augmenting path with respect to $M$ exists.
A classical way to compute a maximum-cardinality (or minimum-cost) matching of a graph is to repeatedly compute an augmenting path with certain properties and augment the matching until no augmenting path remains.

\subsection{Reduction to the Min-Cost Perfect Matching}
\label{sec:reduction-to-mcpm}

\begin{figure}
    \centering
    \includegraphics[width=0.8\linewidth]{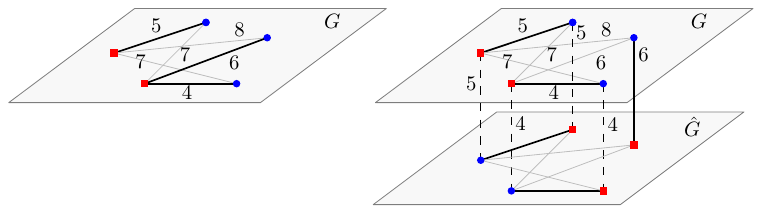}
    \caption{Illustration of the prism graph (right) of the original graph (left).
    The thick segments in the left figure show the min-cost many-to-many matching, and the thick segments in the right figure show the min-cost perfect matching.}
    \label{fig:prism}
\end{figure}

In this section, we describe a reduction from the minimum-cost many-to-many matching problem to the minimum-cost perfect matching problem.
While this is identical to the folklore reduction, we refer to the resulting graph as the \emph{prism graph} for notational convenience.
It preserves the original geometric embedding and induces independent vertex penalties.

Let $R$ and $B$ be two point sets in $[\Delta]^2$, and let $G$ be the complete bipartite graph between $R$ and $B$, where the cost of each edge is the Euclidean distance between its endpoints. 
The reduced instance is called the \emph{prism graph} of $G$, denoted by $\tilde{G} = (\tilde{V}, \tilde{E})$, with an associated edge cost $\tilde{c}: \tilde{E} \to \mathbb{R}$.

\subparagraph*{Prism graph.}
The prism graph $\tilde{G}$ is a two-layered bipartite graph consisting of an ``upper'' layer, which is the original graph $G$, and a ``lower'' layer, which is a mirrored copy $\hat{G}$ of $G$. 
For each vertex $v$ in the upper layer, we add a \textit{link edge} between $v$ and its corresponding copy $\hat{v}$ in the lower layer. 
Note that $\tilde{G}$ remains bipartite, as its vertex set can be partitioned into $R \cup \{\hat v \mid \hat v \text{ is the copy of } v \in B\}$ and $B \cup \{\hat v \mid \hat v \text{ is the copy of } v \in R\}$.
See Figure~\ref{fig:prism}.
For an edge $e=(u,v)$ in $\tilde{G}$, the cost $\tilde{c}(e)$ is defined as follows:
\begin{itemize}
    \item If $u$ and $v$ are both in the upper layer, $\tilde{c}(e) = \|u - v\|$.
    \item If $u$ and $v$ are both in the lower layer, $\tilde{c}(e) = 0$.
    \item If $e$ is a link edge $(v, \hat{v})$, $\tilde{c}(e) = \mu(v)$, where $\mu(v)$ is the distance from $v$ to its nearest neighbor in the opposite set ($R$ or $B$).
\end{itemize}

Note that an optimal edge cover $M$ of $G$ consists of disjoint stars. For a star of $M$ of size larger than two, let $c$ be the center of the star. Then for every vertex of the star other than $c$, its closest neighbor in the opposite set is $c$. Otherwise, we can reconnect it with its closest neighbor without violating the feasibility of the solution. 
This reduces the problem into the minimum-cost matching problem with penalties where the penalty of each point is $\mu(\cdot)$. 
The construction of the prism graph is based on this intuition. 
Matching a vertex via a link edge to the lower layer simulates leaving that vertex uncovered by the primary matching in the upper layer, instead paying the penalty of connecting it to its nearest neighbor.

The values $\mu(v)$ for all $v \in R \cup B$ can be computed in $O(n \log n)$ time by constructing the Voronoi diagrams of $R$ and $B$ and using a planar point location data structure~\cite{DBLP:journals/siamcomp/Kirkpatrick83}. 
Although $\tilde{G}$ has $\Theta(n^2)$ edges in the worst case, we can store it using complexity of $O(n)$ by maintaining only the vertices and the link edges explicitly. The edges in $G$ and $\hat{G}$ can be represented implicitly as they form complete bipartite graphs. 

\medskip 
The following lemma establishes the equivalence between the minimum-cost edge cover of $G$ and the minimum-cost perfect matching of $\tilde{G}$. For illustration, see Figure~\ref{fig:prism}.  

\begin{restatable}[\cite{DBLP:conf/isaac/BandyapadhyayMS21}]{lemma}{reduction}
\label{lem:mcpm-to-mcec}
Given a minimum-cost perfect matching $\tilde{M}^*$ of $\tilde{G}$, a minimum-cost edge cover of $G$ can be computed in $O(n)$ time.
\end{restatable}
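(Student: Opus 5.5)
We prove the lemma by giving an explicit linear-time conversion and then showing, by two cost comparisons, that the conversion outputs an optimal edge cover. Let $\tilde M^*$ be a minimum-cost perfect matching of $\tilde G$.

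\textbf{Conversion.} Let $M_U$ be the set of edges of $\tilde M^*$ with both endpoints in the upper layer; these are edges of $G$. Every other upper vertex $v$ is matched by its link edge $(v,\hat v)$. For each such $v$ we add the edge from $v$ to its nearest neighbor $\nu(v)$ in the opposite set. Here $\nu(v)$ is the nearest neighbor already found when computing $\mu(\cdot)$. Define
\[
M = M_U \cup \{(v,\nu(v)) : (v,\hat v)\in \tilde M^*\}.
\]
Every vertex is covered, either by $M_U$ or by its added nearest-neighbor edge, so $M$ is an edge cover. Its cost is at most $\tilde c(\tilde M^*)$: edges of $M_U$ cost the same in both graphs, each link edge costs $\mu(v)=\|v-\nu(v)\|$, and lower-layer edges cost $0$. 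The cost can be strictly smaller, since an added edge may coincide with an edge of $M_U$ or with another added edge. Removing such duplicates only helps. The construction scans $\tilde M^*$ once with precomputed $\nu$, so it takes $O(n)$ time.

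\textbf{Optimality: the converse direction.} It remains to show $\tilde c(\tilde M^*) \le \mathrm{OPT}$, where $\mathrm{OPT}$ is the minimum edge-cover cost. Take an optimal edge cover $M^\circ$. By the observation before the lemma, $M^\circ$ is a disjoint union of stars, and every leaf of a star with at least three vertices has the center as its nearest neighbor.
\begin{itemize}
\item For each star that is a single edge, put that edge in $M'$.
\item For each star with at least three vertices, pick one leaf $\ell$ and put $(\ell,c)$ in $M'$, where $c$ is the center.
\end{itemize}
Then $M'$ is a matching in $G$. Every vertex left free by $M'$ is a non-chosen leaf $\ell'$ of some star, and its star edge costs exactly $\mu(\ell')$. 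Hence
\[
\mathrm{cost}(M') + \sum_{v\text{ free}} \mu(v) = \mathrm{OPT}.
\]

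We now extend $M'$ to a perfect matching of $\tilde G$ with this cost. Match every free upper vertex $v$ to $\hat v$ by its link edge, at cost $\mu(v)$. The lower vertices $\hat u$ with $u\in V(M')$ remain unmatched, and they must be paired by lower-layer edges of cost $0$. This works because each edge of $M'$ joins one red and one blue vertex, so these leftover lower copies come in red--blue pairs $\hat r,\hat b$. Lower-layer edges join $\hat r$ to $\hat b$, so the leftovers can be perfectly matched among themselves.

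The resulting perfect matching has cost $\mathrm{OPT}$. Combined with the conversion bound,
\[
\mathrm{cost}(M) \le \tilde c(\tilde M^*) \le \mathrm{OPT},
\]
so $M$ is optimal.

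The main obstacle is the converse direction. Its key step is the star-structure claim, which we invoke from the preceding discussion: stars with at least three vertices have leaves whose nearest neighbor is the center. It is also where the red--blue balance of $V(M')$ is needed to complete the lower-layer matching.
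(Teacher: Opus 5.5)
Your proof is correct and follows the paper's route: the same conversion (keep the upper-layer edges of $\tilde M^*$ and add a nearest-neighbor edge for every link-matched vertex), with optimality from $\mathrm{cost}(M)\le \tilde c(\tilde M^*)\le \mathrm{OPT}$; the only difference is that you prove $\tilde c(\tilde M^*)\le\mathrm{OPT}$ directly via the star decomposition, which is valid, whereas the paper simply cites Bandyapadhyay et al.\ for it. One harmless slip: an added nearest-neighbor edge can never coincide with an edge of $M_U$, because a link-matched vertex is not covered by $M_U$, so only added edges can duplicate one another.
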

\begin{proof}
Let $V=(R,B)$ and $E$ denote the vertex set and edge set of the upper layer of $\tilde G$, respectively.  
Bandyapadhyay et al. showed that for any edge cover of $G$ with cost $D$, there exists a perfect matching in $\tilde{G}$ with cost at most $D$. 
Conversely, for any perfect matching in $\tilde{G}$ with cost $D$, there exists an edge cover of $G$ with cost $D$.
Consequently, the cost of a minimum-cost edge cover of $G$ is equal to the cost of a minimum-cost perfect matching of $\tilde{G}$.
By slightly modifying its proof, we can prove the lemma stated above.

We construct a set of edges $C^*$ in $G$ from $\tilde{M}^*$ as follows.
First, we include all edges in $\tilde{M}^* \cap E$ into $C^*$.
Next, for every vertex $v \in R \cup B$ that is not incident to any edge in $\tilde{M}^* \cap E$, we add the edge connecting $v$ to its nearest neighbor in the opposite set, i.e., in $B$ if $v \in R$, and in $R$ if $v \in B$, to $C^*$.
Since the nearest neighbors can be pre-computed, this construction takes $O(n)$ time.

We now show that the cost of $C^*$ equals the cost of $\tilde{M}^*$.
If a vertex $v$ is not covered by $\tilde{M}^* \cap E$, it must be matched in $\tilde{M}^*$ via a link edge.
By the definition of the cost function $\tilde{c}$, the cost of this link edge is exactly $\mu(v)$, which corresponds to the cost of the edge connecting $v$ to its nearest neighbor in $G$.
Therefore, the total cost is preserved.
Bandyapadhyay et al. showed that $C^*$ is a minimum-cost edge cover of $G$.
\end{proof}

\subsection{Overview of Our Algorithm}
Due to Lemma~\ref{lem:mcpm-to-mcec}, the problem reduces to computing a minimum-cost perfect matching of $\tilde G$.
The algorithm consists of two phases.

In the first phase, we compute an approximate perfect matching of $\tilde{G}$ with its corresponding dual weights. We use the approximation algorithm of Bandyapadhyay et al.~\cite{DBLP:conf/isaac/BandyapadhyayMS21} which is based on the scaling algorithm~\cite{DBLP:journals/siamcomp/GabowT89}, with dynamic additively weighted nearest-neighbor data structures~\cite{DBLP:journals/dcg/KaplanMRSS20} to obtain these dual weights in $\tilde{O}(n^{1.5} \log \Delta)$ time. We provide the details of this step in Appendix~\ref{sec:scaling-algo-for-1-opt} to make the paper self-contained.

In the second phase, we leverage these dual weights to construct a sparse candidate subgraph $\tilde{G}_{\textsf{cand}}$ of $\tilde{G}$ containing an optimal perfect matching of $\tilde G$.
Here, a crucial property is that it is \emph{almost} planar. Specifically, the upper layer of $\tilde{G}_{\textsf{cand}}$ possesses a planar \emph{skeleton}, meaning no two edges of $\cand$ cross unless their endpoints are collinear, and the lower layer of $\cand$ is a mirrored copy of its upper layer. We then compute the exact minimum-cost perfect matching of $\tilde{G}_{\textsf{cand}}$ by applying a divide-and-conquer strategy on a balanced separator of this planar skeleton.

\medskip 
While this overall strategy comes from the subquadratic-time algorithm for the one-to-one setting on integer grids~\cite{DBLP:conf/compgeom/Sharathkumar13}, we address two major technical hurdles unique to the many-to-many setting.
\begin{itemize}
    \item During the construction of $\tilde{G}_{\textsf{cand}}$ and within the recursion step, we must efficiently solve subproblems where points can either be matched to the opposite set or left unmatched by paying a penalty. To handle this, we develop an $O(n \log n)$-time algorithm for the 1D minimum-cost matching problem with penalties (Section~\ref{sec:1D}), replacing the simple greedy approach used for standard 1D perfect matching.
    \item In the one-to-one setting~\cite{DBLP:conf/compgeom/Sharathkumar13}, the candidate graph $\cand$ is strictly planar, allowing direct application of the separator-based algorithm  in~\cite{DBLP:journals/siamcomp/LiptonT80}. In our case, $\tilde{G}_{\textsf{cand}}$ is not strictly planar. We overcome this by applying the balanced separator exclusively to the planar skeleton of $\tilde{G}_{\textsf{cand}}$ (Section~\ref{sec:exact-algo-for-mcpm:dnc-algo}). 
    A tricky part is to handle the vertices of $\cand$ not appearing on the planar skeleton. We address this using their 1D nature: these vertices are fully contained in the open intervals (edges) of the planar skeleton, which allows us to process them efficiently using our 1D matching subroutine.
\end{itemize}

\section{Subroutine: 1D Minimum Cost Bipartite Matching with Penalties}\label{sec:1D}
In this section, we prove Theorem~\ref{thm:matching-penalty}, that is, we present an $O(n\log n)$-time algorithm for the 1D minimum-cost bipartite matching problem with penalties. As input, we are given two sets $B$ and $R$ lying on a horizontal line where each point $p\in B\cup R$ is associated with a penalty $w_p$.
For a matching $M\subseteq B\times R$ between $B$ and $R$, its cost $c(M)$ is defined as $\sum_{(r,b)\in M}\|r-b\| + \sum_{p\in (R\cup B)\setminus V(M)} w_p$. That is, its cost is the total edge length of $M$ plus the total penalty of the unmatched vertices. The goal is to compute a matching between $R$ and $B$ of minimum total cost.

To the best of our knowledge, there is no work explicitly studying the minimum-cost bipartite matching problem with penalties even for the one-dimensional case. 
However, there are several ways to address this problem.
First, this problem reduces to the min-cost flow problem on planar directed graphs. 
Also, if points of $B\cup R$ have integer coordinates, and the penalties $w_p$ are all integers, then the resulting instance has integer costs. In this case, we can solve the problem in $\tilde O(n\log N)$ time, where $N$ is the maximum of the maximum cost and the diameter of $B\cup R$. However, the penalty $w_p$ can be an arbitrary real number in our case, which makes the aforementioned algorithm inapplicable in our setting.
Another simple way is to use dynamic programming, which leads to an $O(n^2)$-time algorithm. Since our main algorithm is based on this dynamic programming algorithm, we describe it in more detail here.

\subsection{Quadratic-Time DP Algorithm}
Let $\ell$ be the line containing $B$ and $R$. 
Without loss of generality, assume that $\ell$ is horizontal.
For a point $x\in \ell$, let $B_x$ and $R_x$ be the sets of points in $B$ and $R$, respectively, lying to the left of $x$ (including the points lying on $x$).
For any point $x\in \ell$ and a positive integer $k$, we let $F_x(k)$ denote the minimum matching cost with penalties between $\bar B_x$ and $R_x$,
where $\bar B_x$ is the set obtained from $B_x$ by adding $k$  points at $x$ with penalty $\infty$.
For a negative integer $k$, we let
$F_x(k)$ denote the minimum matching cost with penalties between $B_x$ and $\bar R_x$,
where $\bar R_x$ is the set obtained from $R_x$ by adding $|k|$  points at $x$ with penalty $\infty$.
Then the minimum matching cost between $B$ and $R$ is $F_\infty(0)$. Therefore, the problem reduces to computing $F_x(\cdot)$ for all points $x$ in $\ell$ and for all values $k$ with $-n\leq k \leq n$.

\begin{lemma}
    For any value $k$ and any two points $x$ and $x'$ of $\ell$ such that no point of $R\cup B$ lies between $x$ and $x'$ (including $x$ and $x'$),
    we have $F_x(k)=F_{x'}(k)+|k|\cdot \|x-x'\|$.
\end{lemma}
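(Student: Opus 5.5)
We may assume without loss of generality that $x'$ lies to the left of $x$; this is the orientation in which the identity is meant, since $F_x(k)$ should be the larger value. The plan is to exhibit a cost-preserving-up-to-a-constant bijection between the feasible matchings of the two instances defining $F_x(k)$ and $F_{x'}(k)$. I will treat $k>0$; the case $k<0$ is symmetric with the roles of $B$ and $R$ swapped, and $k=0$ is trivial.

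First I note that the two instances have the same real points. Since no point of $R\cup B$ lies in the closed segment between $x'$ and $x$, we have $B_x=B_{x'}$ and $R_x=R_{x'}$, and every real point of these sets lies strictly to the left of $x'$. So $\bar B_x$ and $\bar B_{x'}$ differ only in where the $k$ dummy points sit: at $x$ in one instance and at $x'$ in the other. Define a map $\phi$ that sends a matching $M$ for the instance at $x'$ to the matching $\phi(M)$ for the instance at $x$ obtained by replacing each dummy at $x'$ with the corresponding dummy at $x$ and keeping every edge otherwise. This map is clearly a bijection on matchings, and it preserves the set of unmatched real points.

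Next I compare costs. A matching with finite cost must match every dummy, because each dummy has penalty $\infty$. Each dummy must therefore be matched to a point $r\in R_x$, and since $r$ lies to the left of $x'\le x$, we get $\|x-r\|=\|x'-r\|+\|x-x'\|$. Penalties of unmatched real points and all edges between real points are unchanged under $\phi$. Hence, for every finite-cost $M$, we have $c(\phi(M))=c(M)+k\|x-x'\|$. If some matching in one instance has a dummy unmatched, then both it and its image under $\phi$ have infinite cost. Taking minima on both sides then gives $F_x(k)=F_{x'}(k)+|k|\,\|x-x'\|$. The infeasible case, where $|R_x|<k$, gives $\infty=\infty$.

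The only delicate point is the geometric one: the dummy-to-real edges all go to the same side of both $x$ and $x'$. This is what turns the edge-length change into an exact additive $\|x-x'\|$ rather than merely a bound. It relies on the hypothesis that no point of $R\cup B$ lies between $x'$ and $x$, including the endpoints, together with the observation that dummies never match each other, since they all lie in the same colour class.
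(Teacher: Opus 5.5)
Your proof is correct and is essentially the paper's argument. The paper proves two inequalities by moving the $k$ infinite-penalty dummies from one location to the other, with each dummy edge changing by exactly $\|x-x'\|$ because every real point lies on the same side; you package that same move as a single cost-shifting bijection $\phi$. Your orientation, with $x'$ to the left of $x$, is the one under which the identity holds and the one the DP recurrence uses. The paper's written proof assumes $x$ is left of $x'$, and its two inequalities then give the identity with $x$ and $x'$ interchanged.
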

\begin{proof}
    Without loss of generality, assume that $x$ lies to the left of $x'$. Note that $B_x = B_{x'}$ and $R_x = R_{x'}$. We consider the case $k > 0$; the case for $k \leq 0$ follows by symmetric arguments. 
    
    Let $M_{x'}$ be an optimal matching between $\bar{B}_{x'}$ and $R_{x'}$ attaining $F_{x'}(k)$. By definition, $\bar{B}_{x'}$ consists of $B_{x'}$ and $k$ virtual points located at $x'$ with penalty $\infty$. Because the penalty is infinite, any matching with finite cost must match all $k$ virtual points to $k$ distinct points in $R_{x'}$.
    Since no point of $B\cup R$ lies between $x$ and $x'$, we have $R_x = R_{x'}$, and every $r \in R_{x'}$ lies to the left of or at $x'$. Hence, $
\|r - x'\| = \|r - x\| + \|x - x'\|$. 
 If we move the $k$ virtual points from $x'$ to $x$, the cost of each of the $k$ edges matched to these virtual points decreases by exactly $\|x - x'\|$. The matching status and penalties of all other points in $(B_{x'} \cup R_{x'}) \setminus V(M_{x'})$ remain unchanged. Thus, we have 
    $F_x(k) \leq F_{x'}(k) - |k| \cdot \|x - x'\|$. 
    
    Conversely, let $M_x$ be an optimal matching attaining $F_x(k)$. By shifting the $k$ virtual points from $x$ to $x'$, the distance from any $r \in R_x$ to the virtual points increases by exactly $\|x - x'\|$. This construction yields 
    $F_{x'}(k) \leq F_x(k) + k \cdot \|x - x'\|$.   
    Combining these two inequalities, we obtain $F_x(k) = F_{x'}(k) + |k| \cdot \|x - x'\|$. For $k < 0$, the same logic applies to the $|k|$ virtual points added to $R$, resulting in $F_x(k) = F_{x'}(k) + |k| \cdot \|x - x'\|$.
\end{proof}

Thus it is sufficient to compute $F_p(k)$ for all points $p\in R\cup B$ and all values $-n\leq k\leq n$. 
But to avoid a degeneracy issue, for each point $p\in R\cup B$, 
we compute $F_{\bar p}(k)$ for a conceptual point $\bar p$ located infinitesimally to the right of $p$. For all distance calculations, we treat $\|p-\bar p\|=0$.
As initialization, let $x_0$ be any point lying strictly to the left of the first point in $B \cup R$. 
Then $F_{x_0}(0)=0$ and $F_{x_0}(k)=\infty$ for any $k\neq 0$.
Starting from this base case, we process the points of $R\cup B$ from left to right along $\ell$.  
We show how to handle a point $p\in R\cup B$.
Without loss of generality, assume that $p\in R$. The other case can be handled symmetrically. Let $p'$ be the point of $B\cup R$ lying immediately to the left of $p$ along $\ell$, and let $d = \|p - p'\|$.  
Then we have the following recurrence relation. 
There are two  possibilities: either $p$ is matched with a blue point in $\bar B_{\bar p}$, or $p$ is not used in the matching by paying the penalty. We consider both cases, and take the one with minimum cost. 
\begin{equation}\label{eqn:dp}
        F_{\bar p}(k) = \min \Big( \underbrace{F_{\bar p'}(k-1) + |k-1|d}_{\text{Match } p}, \quad \underbrace{F_{\bar p'}(k) + |k|d + w_p}_{\text{Pay Penalty}} \Big)
    \end{equation}

The recurrence relation immediately shows how to compute $F_{\bar p}(\cdot)$ for all points $p\in B\cup R$. Each table entry can be computed in constant time, yielding an overall running time of $O(n^2)$.

\medskip

In the following ,
to make the description easier, we simply let $F_p(k):=F_{\bar p}(k)$
as the following analysis is based only on the recurrence relation~(\ref{eqn:dp}).

\subsection{Convexity of the Cost Function}
In this subsection, we show that the function $F_x(\cdot)$ is \emph{convex} for any fixed point $x\in \ell$. 
We say $f: \mathbb{Z}\rightarrow \mathbb{R}$ is \emph{convex} if 
$f(i-1)+f(i+1)\geq 2f(i)$ for every integer $i$.
If $f(i+1)=\infty$ or $f(i-1)=\infty$, the inequality holds immediately.
We let $\textrm{dom}(f)$ be the set of integers $k$ such that $f(k)$ is finite. Let $\Delta f(k) :=f(k)-f(k-1)$.

\begin{lemma}
    The function $F_x(\cdot)$ is convex for any fixed point $x\in R\cup B$.
\end{lemma}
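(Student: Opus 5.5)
We argue by induction along $\ell$, following the order in which recurrence~(\ref{eqn:dp}) processes the points of $R\cup B$. The base case is $F_{x_0}$, which is $0$ at $k=0$ and $\infty$ elsewhere; it is convex because every inequality $f(i-1)+f(i+1)\ge 2f(i)$ involves an infinite term unless $i=0$, and at $i=0$ the left side is $\infty$.

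For the inductive step, assume $F_{p'}$ is convex with domain an integer interval; we carry the interval property along in the induction. Let $p\in R$ be the next point; the case $p\in B$ is symmetric. The recurrence decomposes into three operations, each of which we show preserves convexity.
\begin{enumerate}
\item \emph{Translation.} Let $g(k)=F_{p'}(k)+|k|d$. Since $|k|$ is convex and a sum of convex functions is convex, $g$ is convex. Its domain equals that of $F_{p'}$, so it is still an interval.
\item \emph{Two branches.} The first branch of~(\ref{eqn:dp}) is the shift $g(k-1)$. The second branch is $g(k)+w_p$, a constant offset of $g$.
\item \emph{Minimum.} We have $F_p(k)=\min(g(k-1),\, g(k)+w_p)$. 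This is the infimal convolution of $g$ with the two-point function $h$ given by $h(0)=w_p$, $h(1)=0$, and $h=\infty$ elsewhere: $F_p(k)=\min_j\, g(k-j)+h(j)$.
\end{enumerate}
Both $g$ and $h$ are discrete convex functions on integer intervals. The key fact is that the infimal convolution of two such functions is again convex on an integer interval.

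To verify the key fact concretely, we use slopes. Convexity of $g$ on an interval $[a,b]$ means $\Delta g(k)$ is nondecreasing on $(a,b]$. The two-point function $h$ has a single slope, $\Delta h(1)=-w_p$. Convolving with $h$ has the standard merge effect: the slope sequence of $F_p$ is the sorted merge of the slopes of $g$ with the single value $-w_p$, and the domain becomes $[a,b+1]$. Explicitly, let $t$ be the largest $k$ with $\Delta g(k)< -w_p$ (or $t=a$ if there is none). We claim $F_p(k)=g(k)+w_p$ for $k\le t$ and $F_p(k)=g(k-1)$ for $k>t$. This holds because
\[
g(k-1)-\bigl(g(k)+w_p\bigr)=-\Delta g(k)-w_p ,
\]
which is positive exactly when $\Delta g(k)<-w_p$, and monotonicity of $\Delta g$ gives the single switch point. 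The slopes of $F_p$ are then the slopes of $g$ up to $t$, followed by $-w_p$ at position $t+1$, followed by the remaining slopes of $g$ shifted by one. This sequence is nondecreasing, so $F_p$ is convex.

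We expect the main obstacle to be the boundary bookkeeping: handling infinite values at the ends of the domain, and handling ties $\Delta g(k)=-w_p$. We would resolve these by proving the stronger invariant ``$\mathrm{dom}(F_x)$ is an integer interval and $\Delta F_x$ is nondecreasing on it.'' The convexity stated in the lemma follows from this invariant, since outside the domain the defining inequality holds trivially.
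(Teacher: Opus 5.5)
Your proposal is correct and follows essentially the same route as the paper: induction along $\ell$, convexity of $f(k)=F_{p'}(k)+|k|d$, and a single switch-point argument showing that $\min\{f(k-1),\,f(k)+w_p\}$ has the slope sequence of $f$ with the value $-w_p$ inserted (the paper's Claim~\ref{claim:convex} and Corollary~\ref{cor:search}). The remaining differences do not change the argument: the infimal-convolution framing and your strict rather than non-strict threshold are cosmetic, and your explicit invariant that $\mathrm{dom}(F_x)$ is an integer interval is a useful bit of care that the paper only asserts later.
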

\begin{proof}
    We use induction on the points of $B \cup R$ encountered from left to right.
    Let $x_0$ be a point to the left of the leftmost point of $B \cup R$. We define $F_{x_0}(0) = 0$ and $F_{x_0}(k) = \infty$ for $k \neq 0$, which is trivially convex.

Let $p \in B \cup R$ be the current point and $p'$ be the previous point in $B \cup R \cup \{x_0\}$. Let $d = \|p - p'\|$. 
Without loss of generality, assume that $p\in B$.
By the induction hypothesis, $F_{p'}(\cdot)$ is convex. 
    Note that $|k-1|d$ is a convex function of $k$. Since the sum of two convex functions is also convex, $F_{p'}(k-1)+|k-1|d$ is convex.
    Similarly, $F_{p'}(k)+|k|d+w_p$ is convex. 
    Although the minimum of two convex functions is not convex in general, it is convex in our setting.
    This holds due to the following claim with $f(k)=F_{p'}(k)+|k|d$.
    \begin{claim}\label{claim:convex}
        Let $f: \mathbb{Z} \rightarrow \mathbb{R}$ be a convex function. Then $F_p(k) := \min\{f(k-1), f(k) + w\}$ is also convex for any value $w$.
    \end{claim}
    \begin{proof}
    To prove convexity, it suffices to show that $\Delta F_p(k) \le \Delta F_p(k+1)$ for all integers $k$. 
Let $k^*$ be the largest integer satisfying
$\Delta f(k^*)= f(k^*) - f(k^*-1) \le -w$. 
Since $f$ is convex, we have $\Delta f(k) \le \Delta f(k+1)$ for all $k \in \mathrm{dom}(f)$.
Therefore,
for $k\leq k^*$, we have $F_p(k)=f(k)+w$, and for $k>k^*$, we have $F_p(k)=f(k-1)$.  

We now show that $F_p(\cdot)$ is convex. 
Away from $k^*$ and $k^*+1$, the convexity follows immediately from the previous observation. 
At $k^*$, we have
\begin{align*}
\Delta F_p(k^*) - \Delta F_p(k^*+1)
&\leq f(k^*) - f(k^*-1)- (f(k^*) - f(k^*) - w) \\
&= \Delta f(k^*) + w \\
&\le 0.
\end{align*}
Similarly, at $k^*+1$, we have
\begin{align*}
\Delta F_p(k^*+1)-\Delta F_p(k^*+2) 
&\leq (f(k^*) - f(k^*)-w)- (f(k^*+1) - f(k^*)) \\
&= -\Delta f(k^*+1) -w\\
&\le 0.
\end{align*}
The last inequality holds since $k^*$ is the largest integer with $\Delta f(k^*) \le -w$, and $f$ is convex.
Therefore, at any integer $k$, the function $F_p(k)$ is also convex, and thus the claim holds.
\end{proof}

Therefore, the lemma holds for any fixed point $x\in R\cup B$.
\end{proof}

As we showed in the proof of Claim~\ref{claim:convex}, the following corollary holds.
\begin{corollary}\label{cor:search}
Let $k^*$ be the largest integer with 
$\Delta f(k) \le -w_p$.
    Then for $k\leq k^*$, we have $F_p(k)=f(k)+w_p$, and for $k>k^*$, we have $F_p(k)=f(k-1)$. 
\end{corollary}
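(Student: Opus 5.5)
Because $f$ is convex, the whole argument reduces to comparing the two branches of the recurrence entrywise, using the monotonicity of $\Delta f$. Here $f(k)=F_{p'}(k)+|k|d$ is convex, and $F_p(k)=\min\{f(k-1),\,f(k)+w_p\}$, which is the recurrence~(\ref{eqn:dp}) written in terms of $f$. For every $k$ we have
\[
f(k)+w_p \le f(k-1) \iff \Delta f(k)\le -w_p ,
\]
so the minimum picks the ``penalty'' branch exactly when $\Delta f(k)\le -w_p$ and the ``match'' branch otherwise. Since $f$ is convex, $\Delta f(k)\le \Delta f(k+1)$ for all $k$ (with the conventions for $\pm\infty$ values). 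Hence the set $\{k:\Delta f(k)\le -w_p\}$ is a down-set of integers, that is, an interval unbounded to the left. Its supremum is $k^*$, and the statement follows: for $k\le k^*$ the condition holds and $F_p(k)=f(k)+w_p$; for $k>k^*$ it fails and $F_p(k)=f(k-1)$.

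The only delicate point is handling infinite values and the boundary of $\mathrm{dom}(f)$, which I would treat in three steps.

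\textbf{Convention.} I would fix the convention for $\Delta f(k)$ when $f(k)$ or $f(k-1)$ is $\infty$: set $\Delta f(k)=+\infty$ if $f(k)=\infty$, and $\Delta f(k)=-\infty$ if $f(k)$ is finite but $f(k-1)=\infty$.

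\textbf{Monotonicity across the boundary.} Since $\mathrm{dom}(f)$ is an integer interval (it is inherited by induction from $F_{x_0}$, whose domain is $\{0\}$), I would check that $\Delta f$ remains nondecreasing across the boundary with this convention. The value $-\infty$ occurs only at the left end of the domain, finite values occur inside it, and $+\infty$ occurs to the right. The equivalence above still holds in these edge cases, because comparing $\infty$ with a finite number selects the correct branch.

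\textbf{Existence of $k^*$.} I would verify that $k^*$ exists, i.e.\ the set $\{k:\Delta f(k)\le -w_p\}$ is nonempty and bounded above, so that ``largest integer'' makes sense. The set contains the left endpoint of the domain, where $\Delta f=-\infty$. It excludes everything beyond the right endpoint, where $\Delta f=+\infty$. So the set is nonempty and bounded above. With these edge cases settled, the corollary is exactly the case split already used in the proof of Claim~\ref{claim:convex}.
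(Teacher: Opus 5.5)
Your approach is the paper's own. The corollary is just the case split in the proof of Claim~\ref{claim:convex}: convexity makes $\Delta f$ nondecreasing, so the comparison $f(k)+w_p\le f(k-1)$, i.e.\ $\Delta f(k)\le -w_p$, switches branch exactly once, at $k^*$. The core argument is correct. The paper itself only asserts monotonicity on $\mathrm{dom}(f)$ and leaves the boundary implicit.

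The boundary treatment you added contains one error. Write $\mathrm{dom}(f)=[a,b]$.

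\begin{itemize}
\item Under your value-based convention, $\Delta f(k)=+\infty$ also for every $k<a$, because $f(k)=\infty$ there. So $+\infty$ does not occur only to the right of the domain.
\item Consequently $\Delta f$ is not nondecreasing across the left end: $\Delta f(a-1)=+\infty>-\infty=\Delta f(a)$.
\item The set $\{k:\Delta f(k)\le -w_p\}$ is the interval $[a,k^*]$, not a down-set.
\item Your claim that the branch equivalence ``still holds in these edge cases'' fails wherever $f(k)=f(k-1)=\infty$, i.e.\ for $k<a$ and $k>b+1$. There $f(k)+w_p\le f(k-1)$ reads $\infty\le\infty$, which is true, while $\Delta f(k)\le -w_p$ is false.
\end{itemize}

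The statement survives, and the repair is one line. At those indices both branches equal $\infty$ and tie, so $F_p(k)=f(k)+w_p=f(k-1)$, and whichever formula the corollary names is correct. You only need the equivalence on $[a,b+1]$, where it does hold. On that interval $\Delta f$ is genuinely nondecreasing: $\Delta f(a)=-\infty$, the values are finite and nondecreasing by convexity on $(a,b]$, and $\Delta f(b+1)=+\infty$. Your down-set reasoning then applies there verbatim. Your existence argument for $k^*$ is fine as written, since $a$ lies in the set and no index beyond $b$ does.
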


\subsection{Near-Linear-Time Algorithm}
\label{sec:near-linear-algo}
In this subsection, we present an $O(n\log n)$-time algorithm for the 1D minimum-cost bipartite matching problem with penalties. 
For each $p\in B\cup R$, let $\mathrm{dom}(F_p)=[L_p,R_p]$ be the maximal interval
of integers for which $F_p(k)<\infty$.
By the recurrence, $\mathrm{dom}(F_x)$ is always a contiguous interval.
Basically, we compute $F_p(k)$ for all integers $k\in\textrm{dom}(F_p)$ and all points $p\in B\cup R$. 
By maintaining $F_p(k)$ using a data structure for a fixed $p$, we show that $F_p(k)$ can be computed in $O(\log n)$ time for all integers $k$ for a fixed $p$ assuming that we have $F_{p'}(\cdot)$, where $p'$ is the point of $B\cup R$ 
lying immediately to the left of $p$.

\subparagraph{Representation of the cost function.}
Fix $x\in\ell$, and let $\Delta F_x(k) := F_x(k)-F_x(k-1)$. 
We maintain $F_x(0)$ and $\Delta F_x(k)$ for all indices $k$. Note that they fully represent $F_x(\cdot)$.
To make the update efficiently, we store $\Delta F_x(\cdot)$ using two balanced binary search trees.
Let $H^+_x$ and $H^-_x$ be two sets where 
\[H^+_x=\{\Delta F_x(k) \mid 0<k\le R_x\} \text{ and } H^-_x=\{\Delta F_x(k)\mid L_x< k\le 0\}.\]

Since $F_x(\cdot)$ is convex, $\Delta F_x(k)$ is non-decreasing with respect to $k$.
We maintain the two sets using two binary search trees $T^+_x$ and $T^-_x$ and offsets $o^+$ and $o^-$. 
The offsets allow us to shift all stored values uniformly without updating each element explicitly. 
Each leaf of a tree of $T_x^+$ (and $T_x^-$) corresponds to exactly one index $k>0$ (and $k<0$), and the leaf stores the value 
$\Delta F_x(k) - o^+$ (and $\Delta F_x(k) - o^-$).
Thus, the actual value of $\Delta F_x(k)$ is obtained by adding the corresponding offset $o^+$ or $o^-$. 
Each internal node additionally stores the size of its subtree and the sum of the stored values in its subtree (excluding the offset). 
Hence, the true sum of the values in any subtree can be obtained by adding the offset multiplied by the subtree size. 
The leaves of $T_x^+$ (and $T_x^-$) are stored in increasing order of their indices $k$.
Since $F_x(\cdot)$ is convex, the stored values are non-decreasing along this order.
Therefore, $T_x^+$ and $T_x^-$ can be implemented as binary search trees augmented with subtree sizes and subtree sums, supporting insertion, deletion, and prefix-sum queries in $O(\log n)$ time.
We call the tuple of $F_x(0)$, $o^+$, $o^-$, $T_x^+$ and $T_x^-$ the \emph{representation} of $F_x(\cdot)$. 
Here, since the roots store the size of the trees, we do not need to explicitly maintain $\textrm{dom}(F_p)$.  

\begin{observation}\label{obs:value}
    Given the representation of $F_x(\cdot)$, we can compute $F_x(k)$ for any integer $k$ in $O(\log n)$ time.
\end{observation}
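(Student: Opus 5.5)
To prove Observation~\ref{obs:value}, we write $F_x(k)$ as $F_x(0)$ plus a telescoping sum of consecutive differences. The representation stores exactly those differences, in two augmented trees with uniform offsets, so each such sum is a single prefix-sum query.

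\textbf{Case $k>0$.} By telescoping,
\[
F_x(k)=F_x(0)+\sum_{j=1}^{k}\Delta F_x(j).
\]
The leaves of $T^+_x$ are ordered by increasing index $j$, and each stores $\Delta F_x(j)-o^+$.
\begin{itemize}
\item If $k$ exceeds the size of $T^+_x$ (read from the root), then $k>R_x$ and we return $\infty$.
\item Otherwise, descend from the root to locate the $k$-th leaf, using the subtree sizes. Along the way, accumulate the stored sums of the left subtrees that are passed over, plus the leaves on the path that lie at or before position $k$. This yields $S=\sum_{j=1}^{k}(\Delta F_x(j)-o^+)$ in $O(\log n)$ time, since the tree is balanced.
\item Return $F_x(0)+S+k\cdot o^+$. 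The offset enters once for each of the $k$ summed leaves, as noted in the description of the subtree sums.
\end{itemize}

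\textbf{Case $k<0$.} Here
\[
F_x(k)=F_x(0)-\sum_{j=k+1}^{0}\Delta F_x(j).
\]
These are the $|k|$ rightmost leaves of $T^-_x$, with indices $k+1,\dots,0$.
\begin{itemize}
\item If $|k|$ exceeds the size of $T^-_x$, then $k<L_x$ and we return $\infty$.
\item Otherwise, compute the suffix sum $S$ of the last $|k|$ stored values. This equals the root sum minus the prefix sum of the first $|T^-_x|-|k|$ leaves, obtained by the same descent.
\item Return $F_x(0)-(S+|k|\cdot o^-)$.
\end{itemize}

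\textbf{Case $k=0$.} Return $F_x(0)$ directly.

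All operations are one root-to-leaf traversal plus constant arithmetic, so the query takes $O(\log n)$ time.

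The only delicate point is the index bookkeeping in $H^-_x$. It stores $\Delta F_x(j)$ for $L_x<j\le 0$, so its rightmost leaf corresponds to $j=0$, and the boundary test must match $\mathrm{dom}(F_x)=[L_x,R_x]$, which has $|T^-_x|=-L_x$ and $|T^+_x|=R_x$. I would check that the paper's phrase ``index $k<0$'' for $T^-_x$ is an off-by-one. I would read it as $k\le 0$, consistent with the definition of $H^-_x$.
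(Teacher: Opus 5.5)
Your proposal is correct and follows the paper's argument: telescope from $F_x(0)$, answer a prefix-sum query on the size- and sum-augmented tree $T^+_x$, and add $k\cdot o^+$. Your explicit suffix-sum treatment of $k<0$ and your $\infty$ check outside $\mathrm{dom}(F_x)$ fill in what the paper only calls ``symmetric.'' You are also right that index $0$ belongs to $T^-_x$; this matches the definition of $H^-_x$ and Step~3 of the update, so the paper's ``$k<0$'' is a slip.
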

\begin{proof}

We show how to compute $F_x(k)$ for any integer $k>0$. For $k=0$, we maintain $F_x(0)$ explicitly, and for $k<0$, we can do this symmetrically. 
By telescoping, we have
\[
F_x(k)
= F_x(0) + \sum_{i=1}^{k} \Delta F_x(i).
\]
Thus, it suffices to compute the prefix sum of the first $k$ values in $H_x^+$.

Since $\Delta F_x(k)$ is non-decreasing in $k$, the values stored in $T_x^+$ are already ordered by index.
We augment each node of $T_x^+$ with the size of its subtree and the sum of the stored values in its subtree (excluding the offset $o^+$).
Using standard binary search tree operations, we can retrieve the sum of the smallest $k$ elements in $T_x^+$ in $O(\log n)$ time.
Adding this sum and the accumulated offset $k \cdot o^+$ to $F_x(0)$ yields $F_x(k)$. 
Therefore, given the representation of $F_x(\cdot)$, we can compute $F_x(k)$ for any integer $k$ in $O(\log n)$ time.
\end{proof}

\subparagraph{Update of the representation.}

Assume that we are given the representation of $F_{p'}(\cdot)$, and we wish to compute the representation of $F_p(\cdot)$, where $p$ is the next point to the right of $p'$ on $\ell$. 
We describe the case $p \in B$; the case $p \in R$ is symmetric.
Let $d := \|p-p'\|$.

\medskip 
\noindent\textbf{Step 1.}
We first consider the transportation cost. Let $f(k) := F_{p'}(k) + |k|d$. 
Thus, before handling the matching/penalty decision at $p$, we first transform the representation of $F_{p'}$ into the representation of $f$. For $k>0$, we have
$\Delta f(k)= \Delta F_{p'}(k)+d$, and 
for $k\leq 0$, we have $\Delta f(k)= \Delta F_{p'}(k)-d$.
Hence, this update can be performed by
 increasing the offset $o^+$ by $d$, and
 decreasing the offset $o^-$ by $d$, 
while keeping the trees unchanged.
Also, $F_{p'}(0)$ remains unchanged since $|0|d=0$.
After this step, the representation corresponds to $f(\cdot)$.

\medskip 
\noindent\textbf{Step 2.}
We then find the largest integer $k^*$ with 
$\Delta f(k) \le -w_p$
by a binary search over the ordered trees $T_{p'}^+$ and $T_{p'}^-$ in $O(\log n)$ time.
Since $\Delta f(k)$ is non-decreasing in $k$, 
the threshold $k^*$ can be found by searching for the largest index 
whose discrete derivative is at most $-w_p$. 
Recall that 
$F_p(k) = \min \{ f(k-1),\, f(k) + w_p \}$ from~\eqref{eqn:dp}. 
By Corollary~\ref{cor:search}, 
for $k\leq k^*$, we have $F_p(k)=f(k)+w_p$, and for $k>k^*$, we have $F_p(k)=f(k-1)$. 

\medskip
\noindent\textbf{Step 3.} 
We now update the representation of $f(\cdot)$ to obtain that of $F_p(\cdot)$.
We first compute $F_p(0)$.
Since $F_p(0)=\min\{f(-1),\,f(0)+w_p\}$, 
this value can be computed in $O(\log n)$ time using
Observation~\ref{obs:value}.

We next determine the new discrete derivatives.
For $k \le k^*$, both $F_p(k)$ and $F_p(k-1)$ lie in the same case, and hence
\[
\Delta F_p(k)=F_p(k)-F_p(k-1)
             =f(k)-f(k-1)
             =\Delta f(k).
\]
For $k > k^*+1$, we have
\[
\Delta F_p(k)
=F_p(k)-F_p(k-1)
=f(k-1)-f(k-2)
=\Delta f(k-1).
\]
At the boundary $k=k^*+1$, we have
\[
\Delta F_p(k^*+1)
=F_p(k^*+1)-F_p(k^*)
=f(k^*)-(f(k^*)+w_p)
=-w_p.
\] 
Therefore, the sequence $\Delta F_p(\cdot)$ is obtained from
$\Delta f(\cdot)$ by inserting the single value $-w_p$ (minus the corresponding offset) 
at position $k^*+1$.
As a result, all indices strictly larger than $k^*$ are shifted by one position to the right.

Since the discrete derivatives are stored in a binary search tree,
this transformation can be implemented by a single rank-based insertion
at index $k^*+1$.
If $k^* < 0$, the insertion shifts the element originally at index $0$ to index $1$. 
Because index $0$ belongs to $T_{p'}^-$ and index $1$ belongs to $T_{p'}^+$, we must extract the maximum element from $T_{p'}^-$ and insert it as the minimum element in $T_{p'}^+$ (after adjusting the values with respect to the offsets for the two trees).
All subtree sizes and subtree sums are updated along the search path. 
The insertion and any necessary boundary shifts take $O(\log n)$ time.
Thus, the entire update from the representation of $F_{p'}(\cdot)$
to that of $F_p(\cdot)$ takes $O(\log n)$ time.

\begin{lemma}
    Given the representation of $F_{p'}(\cdot)$, 
    we can compute the representation of $F_p(\cdot)$ in $O(\log n)$ time.
\end{lemma}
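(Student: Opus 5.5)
The plan is to check that Steps~1--3 above, run on the representation $(F_{p'}(0),o^+,o^-,T_{p'}^+,T_{p'}^-)$, produce exactly the representation of $F_p(\cdot)$, and that each step uses only $O(1)$ elementary operations on augmented balanced search trees. Correctness relies on two earlier facts. The first is the recurrence~\eqref{eqn:dp}, written as $F_p(k)=\min\{f(k-1),f(k)+w_p\}$ with $f(k)=F_{p'}(k)+|k|d$. The second is Corollary~\ref{cor:search}, which applies because $f$ is convex by the convexity lemma together with convexity of $|k|d$. I treat $p\in B$; for $p\in R$ the same argument goes through after swapping the roles of positive and negative $k$.

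\textbf{Step 1.} Since $f(k)-F_{p'}(k)=|k|d$, the discrete derivative satisfies $\Delta f(k)=\Delta F_{p'}(k)+d$ for $k\ge 1$ and $\Delta f(k)=\Delta F_{p'}(k)-d$ for $k\le 0$, while $f(0)=F_{p'}(0)$. So it suffices to add $d$ to $o^+$ and subtract $d$ from $o^-$, which costs $O(1)$. The key point is that the index split between $T^+$ and $T^-$ matches the sign split of $|k|$, so a single offset per tree is valid.

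\textbf{Step 2.} Locate $k^*$. The sequence $\Delta f$ is non-decreasing, and the leaves are ordered by index, so the stored values plus offset are sorted across $T^-$ followed by $T^+$. I will first compare the threshold $-w_p$ with the maximum of $T^-$ and the minimum of $T^+$ to decide which tree contains the crossing. I then descend that tree, using subtree sizes to recover the rank of the last leaf whose value is at most $-w_p$. Degenerate cases are handled directly: if no index qualifies, $k^*$ is just below $\mathrm{dom}(f)$; if every index qualifies, $k^*$ is the right end of $\mathrm{dom}(f)$. I need to confirm that Corollary~\ref{cor:search} still holds at the domain boundaries, where $f=\infty$. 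Each case has cost $O(\log n)$.

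\textbf{Step 3.} Rebuild the representation. $F_p(0)$ is obtained from two evaluations via Observation~\ref{obs:value}, at $O(\log n)$ each. By the derivative identities derived above, the new sequence $\Delta F_p$ is $\Delta f$ with the value $-w_p$ inserted at rank $k^*+1$, so the domain grows by one on the right. I will perform a single rank-based insertion, storing $-w_p-o^\pm$ according to the tree that receives it. If $k^*\ge 1$, the insertion goes into $T^+$ and nothing moves between trees. If $k^*\le -1$, the insertion goes into $T^-$, and then the element at index $0$ has moved to index $1$. In that case I extract the maximum of $T^-$ and insert it as the minimum of $T^+$, re-encoding its value as $v+o^- - o^+$. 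The case $k^*=0$ places $-w_p$ exactly at index $1$, i.e.\ as the new minimum of $T^+$. Each of these operations updates sizes and sums along one root-to-leaf path, in $O(\log n)$.

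I expect the main obstacle to be bookkeeping rather than any new idea. Specifically, I must verify that after the index shift every leaf is still in the tree matching the sign of its new index, with the correct offset encoding, and that the derivative at the boundary $k=k^*+1$ is handled correctly even when $k^*$ lies at an end of the domain, where the $\infty$ conventions matter. Once this is checked, the lemma follows: the total work is a constant number of $O(\log n)$ tree operations.
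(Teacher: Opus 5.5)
Your proposal is correct and follows the paper's argument: Step~1 updates the offsets, Step~2 binary-searches for $k^*$ across $T^-$ and $T^+$, and Step~3 evaluates $F_p(0)$ and then makes a single insertion of $-w_p$ at index $k^*+1$, moving the maximum of $T^-$ into $T^+$ when $k^*<0$. Your separate handling of $k^*=0$, the re-encoding $v+o^--o^+$, and the degenerate cases only spells out what the paper leaves implicit. One small correction: when no stored index qualifies, the correct value is $k^*=L_{p'}$, the left endpoint of $\mathrm{dom}(f)$, not the integer below it, although the resulting operation (inserting $-w_p$ at the front of the sequence) is the same either way.
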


We consider the points of $B\cup R$ from left to right. The initialization takes $O(n)$ time, and each update takes $O(\log n)$ time. Therefore, the total running time is $O(n\log n)$.

\section{Minimum-Cost Perfect Matching for Prism Graphs}
\label{sec:exact-algo-for-mcpm}
In this section, we present an exact algorithm for computing a minimum-cost perfect matching of the prism graph $\tilde{G}$.
To simplify the exposition, we identify the vertices in the \emph{upper layer} of the prism graph $\tilde{G}$ directly with the original points in $R$ and $B$. 
Specifically, we let the upper layer vertex set be $R \cup B$.
We then define the \emph{lower layer} as a mirrored copy of the upper layer, where each vertex $v \in R \cup B$ has a corresponding copy $\hat{v}$ in the lower layer.

Our strategy involves two main phases: computing an approximate minimum-cost perfect matching of $\tilde{G}$ along with its associated dual weights, and utilizing these dual weights to derive the exact solution.
We begin by establishing the properties of an approximate solution and the associated dual weights.
\subparagraph*{1-optimality.}
Given a \emph{scaling factor} $\theta > 0$, let $\tilde{G}_{\theta} = (\tilde{V}, \tilde{E})$ denote the \emph{$\theta$-scaled graph} of $\tilde{G}$ with the edge-cost function $\tilde{c}_{\theta} : \tilde{E} \to \mathbb{Z}_{\ge 0}$, defined as $\tilde{c}_{\theta}(e) = \lceil \tilde{c}(e) / \theta \rceil$ for all $e \in \tilde{E}$.
Let $y(v)$ denote the \emph{dual weight} of a vertex $v \in \tilde{V}$.
A matching $\tilde{M}$ in $\tilde{G}_{\theta}$ and the associated dual weights $y(\cdot)$ are said to be \emph{1-feasible} if
\begin{align}
y(u) + y(v) &\le \tilde{c}_{\theta}(e) + 1 \quad \text{for all } e = (u,v) \in \tilde{E}, \label{ineq:1-feas-edge} \\
y(u) + y(v) &= \tilde{c}_{\theta}(e) \quad \text{for all } e = (u,v) \in \tilde{M}. \label{eq:1-feas-matching}
\end{align}
We refer to (\ref{ineq:1-feas-edge}) and (\ref{eq:1-feas-matching}) as the \emph{1-feasibility conditions}.
An edge $e = (u, v) \in \tilde{E} \setminus \tilde{M}$ is called \emph{admissible} if $y(u) + y(v) = \tilde{c}_{\theta}(e) + 1$.
The \emph{admissible graph} of $\tilde{G}_{\theta}$ is the subgraph of $\tilde{G}_{\theta}$ consisting of all admissible edges and the edges in $\tilde{M}$.
If a matching $\tilde{M}$ and dual weights $y(\cdot)$ satisfy the 1-feasibility conditions and $\tilde{M}$ is a perfect matching, we call the pair $(\tilde{M}, y)$ \emph{1-optimal}.
Sharathkumar and Agarwal \cite{DBLP:conf/soda/SharathkumarA12} showed that a 1-optimal matching of $\tilde{G}_{\theta}$ provides an approximate minimum-cost perfect matching of $\tilde{G}$ with an additive error $\varepsilon$ of at most $3n\theta$.

We now outline our algorithm for computing an exact minimum-cost perfect matching of $\tilde{G}$.
First, we compute a 1-optimal matching $\tilde{M}_1^*$ and associated dual weights $y(\cdot)$ of $\tilde{G}_{\theta}$ using a scaling factor of $\theta = 1 / (n \Delta^{33})$, following the procedure described in Section \ref{sec:scaling-algo-for-1-opt}.
In Section~\ref{sec:exact-algo-for-mcpm:candidate-construction}, we utilize these dual weights to identify a set of \emph{eligible edges} $E_{\textsf{elig}}$ of the upper layer of $\tilde G$ and exploit their geometric properties to construct a set $P$ of non-crossing line segments having endpoints at $B\cup R$.
Using $P$, we systematically extract a sparse candidate subgraph $\tilde{G}_{\textsf{cand}}$ of size $O(n)$ that is guaranteed to contain a minimum-cost perfect matching of $\tilde{G}$.
Finally, Section \ref{sec:exact-algo-for-mcpm:dnc-algo} presents a divide-and-conquer algorithm to compute an exact optimal solution in $\tilde{G}_{\textsf{cand}}$ by applying the planar separator theorem to the graph induced by $P$.

\subsection{Construction of the Candidate Subgraph}
\label{sec:exact-algo-for-mcpm:candidate-construction}
We construct a \emph{candidate subgraph} $\tilde{G}_{\textsf{cand}} = (\tilde{V}, \tilde{E}_{\textsf{cand}})$ of $\tilde{G}$ that is guaranteed to contain an optimal matching of $\tilde{G}$.
We begin by utilizing the 1-optimal matching $\tilde M_1^*$ and the 1-optimal dual weights of $\tilde{G}_{\theta}$ 
with a scaling factor of $\theta = 1 / (n \Delta^{33})$
to characterize the \emph{eligible edges} in the upper layer that can participate in an optimal solution.
This characterization reveals a crucial geometric property of these edges.
Using this property, we apply the technique of Sharathkumar \cite{DBLP:conf/compgeom/Sharathkumar13} to construct a planar skeleton (planar straight-line graph) $P$.
Finally, we use $P$ to generate the set of \emph{candidate edges} $\tilde{E}_{\textsf{cand}}$.

The overall construction comes from Sharathkumar~\cite{DBLP:conf/compgeom/Sharathkumar13}: It applies the approach to a Euclidean complete bipartite graph while we apply it to the upper layer of the prism graph $\tilde{G}$. As the upper layer is a Euclidean complete bipartite graph, the analysis immediately follows from~\cite{DBLP:conf/compgeom/Sharathkumar13}, but the detailed implementation should be tailored for our setting.

\subparagraph*{Eligible edge decomposition and its induced planar skeleton.}
We define the \emph{scaled dual weight} of a vertex $v \in \tilde{V}$ as $y_{\theta}(v) = \theta \cdot y(v)$.
The following lemma establishes a lower bound on the sum of scaled dual weights for the edges participating in an optimal matching of $\tilde{G}$.

\begin{lemma} \label{lem:ineq-for-opt-edge}
Let $y_{\theta}(\cdot)$ be the scaled dual weights associated with a $1$-optimal matching of $\tilde{G}_{\theta}$ using the scaling factor $\theta = 1 / (n\Delta^{33})$.
Let $\tilde{M}^*$ be a minimum-cost perfect matching of $\tilde{G}$.
Then, for every edge $(u,v) \in \tilde{M}^*$ in the upper layer, 
\begin{align}
y_{\theta}(u) + y_{\theta}(v) > \|u-v\| - \frac{2}{\Delta^{33}}. \label{ineq:elig-condition}
\end{align}
\end{lemma}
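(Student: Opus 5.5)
The plan is a standard primal--dual exchange argument on the symmetric difference $\tilde M^*\oplus\tilde M_1^*$. Throughout, $(\tilde M_1^*,y)$ is the 1-optimal pair of $\tilde G_\theta$ and $\theta=1/(n\Delta^{33})$, so $2/\Delta^{33}=2n\theta$. Multiplying by $1/\theta$, the claim becomes $y(u)+y(v)>\|u-v\|/\theta-2n$. I will use two facts repeatedly: $\tilde c(e)/\theta\le\tilde c_\theta(e)<\tilde c(e)/\theta+1$ for every edge, and $|\tilde V|=2n$.

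\textbf{Easy case.} Suppose $(u,v)\in\tilde M_1^*$. By \eqref{eq:1-feas-matching}, $y(u)+y(v)=\tilde c_\theta(u,v)\ge\|u-v\|/\theta$. This gives the claim with room to spare.

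\textbf{Setting up the contradiction.} Now suppose $(u,v)\notin\tilde M_1^*$, and assume for contradiction that $y(u)+y(v)\le\|u-v\|/\theta-2n\le\tilde c_\theta(u,v)-2n$. Both $\tilde M^*$ and $\tilde M_1^*$ are perfect matchings of $\tilde G$. Hence $\tilde M^*\oplus\tilde M_1^*$ is a disjoint union of even alternating cycles, and one of them, $C$, contains $(u,v)$. Since $C$ is a simple cycle in a graph on $2n$ vertices, it contains $k\le n$ edges of each matching.

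\textbf{Summing duals around $C$.} I will sum the dual weights of the vertices of $C$ in two ways.
\begin{itemize}
\item Each vertex of $C$ is covered exactly once by $\tilde M_1^*\cap C$. By \eqref{eq:1-feas-matching}, the sum equals $\sum_{e\in\tilde M_1^*\cap C}\tilde c_\theta(e)$.
\item Each vertex is also covered exactly once by $\tilde M^*\cap C$. Apply \eqref{ineq:1-feas-edge} to the $k-1$ edges other than $(u,v)$, and the assumed bound to $(u,v)$. This shows the sum is at most $\sum_{e\in\tilde M^*\cap C}\tilde c_\theta(e)+(k-1)-2n$.
\end{itemize}
Combining the two,
\[
\sum_{e\in\tilde M_1^*\cap C}\tilde c_\theta(e)\;\le\;\sum_{e\in\tilde M^*\cap C}\tilde c_\theta(e)-n-1 .
\]

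\textbf{Converting back to true costs.} On the left, $\tilde c(e)/\theta\le\tilde c_\theta(e)$. On the right, $\tilde c_\theta(e)<\tilde c(e)/\theta+1$ on each of the $k\le n$ edges. Together these give
\[
\frac1\theta\sum_{\tilde M_1^*\cap C}\tilde c(e)\;<\;\frac1\theta\sum_{\tilde M^*\cap C}\tilde c(e)+k-n-1\;<\;\frac1\theta\sum_{\tilde M^*\cap C}\tilde c(e).
\]
Therefore $\tilde M^*\oplus C$ is a perfect matching of $\tilde G$ with strictly smaller cost than $\tilde M^*$, contradicting the optimality of $\tilde M^*$.

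\textbf{Main obstacle.} The step needing care is the bookkeeping of the additive slack: the $+1$'s from \eqref{ineq:1-feas-edge} and the ceiling errors together contribute at most about $2k\le 2n$, and this must be strictly beaten by the assumed deficit of $2n$ on the single edge $(u,v)$. The argument uses nothing geometric, only that $C$ has at most $n$ edges of each matching. It also treats link edges and lower-layer edges uniformly, since $\tilde c_\theta$ is defined as a ceiling for every edge and equals $0$ on lower-layer edges.
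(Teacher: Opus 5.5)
Your proof is correct. On the cycle $C$, the $k-1$ unit slacks from \eqref{ineq:1-feas-edge} plus the $k$ ceiling errors total less than $2k\le 2n$, so the assumed deficit of $2n$ on $(u,v)$ really does make $\tilde M^*\oplus C$ a cheaper perfect matching.

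The paper gives no proof of its own here; it defers to Sharathkumar. The natural comparison is therefore the direct global slack count. It uses the same ingredients as your argument but needs no cycle decomposition and no contradiction. For $e=(u,v)\in\tilde E$ set $s(e)=\tilde c_\theta(e)+1-y(u)-y(v)\ge 0$, and sum over the $n$ edges of $\tilde M^*$. Three facts are used: \eqref{eq:1-feas-matching}, which gives $\sum_{w\in\tilde V}y(w)=\sum_{e\in\tilde M_1^*}\tilde c_\theta(e)$; the ceiling bounds; and the optimality of $\tilde M^*$. Together they give
\[
\sum_{e\in\tilde M^*}s(e)=\sum_{e\in\tilde M^*}\tilde c_\theta(e)+n-\sum_{e\in\tilde M_1^*}\tilde c_\theta(e)<\frac{1}{\theta}\Bigl(\sum_{e\in\tilde M^*}\tilde c(e)-\sum_{e\in\tilde M_1^*}\tilde c(e)\Bigr)+2n\le 2n .
\]
Since every slack is nonnegative, each individual slack is below $2n$. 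This gives $y(u)+y(v)>\|u-v\|/\theta-(2n-1)$, which is the lemma.

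Your exchange argument is exactly this computation restricted to $C$. There the same identity, with optimality used as $\sum_{e\in\tilde M^*\cap C}\tilde c(e)\le\sum_{e\in\tilde M_1^*\cap C}\tilde c(e)$, bounds the slack of $(u,v)$ by the cycle length $|C|=2k$. The global version is shorter and does not need the structure of $\tilde M^*\oplus\tilde M_1^*$. Yours gives a sharper, local bound, though the lemma does not need it.
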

The proof is same as the argument in \cite{DBLP:conf/compgeom/Sharathkumar13} and is thus omitted.
We define the set of \emph{eligible edges}, denoted by $E_{\textsf{elig}}$, as the set of all edges in the upper layer of $\tilde G$ satisfying the condition~(\ref{ineq:elig-condition}). 
Lemma \ref{lem:ineq-for-opt-edge} implies that for any optimal matching $\tilde{M}^*$ of $\tilde{G}$, all edges of $\tilde{M}^*$ that belong to the upper layer are contained in $E_{\textsf{elig}}$.
Therefore, even if we remove all non-eligible edges in the upper layer from the prism graph, the optimal solution remains the same. 

The following lemma shows that if no four vertices of $R\cup B$ are collinear,
then the graph obtained from the upper layer of the prism graph by removing all non-eligible edges is planar.
Thus our main focus is to handle collinear points in the following. 

\begin{lemma}[\cite{DBLP:conf/compgeom/Sharathkumar13}] \label{lem:ineq-for-crossing-seg}
Let $Q \subset [\Delta]^2$ be a set of points.
Let $a,b,c,d \in Q$ be four distinct points that are not collinear.
If the line segments $\overline{ab}$ and $\overline{cd}$ intersect, then
\[
\|a - d\| + \|b - c\| + \frac{1}{\Delta^{32}} < \|a - b\| + \|c - d\|.
\]
Therefore, for any two distinct edges $(a, b)$ and $(c, d)$ in $E_{\textsf{elig}}$ such that their four endpoints are not collinear, 
the line segments $\overline{ab}$ and $\overline{cd}$ do not intersect.
\end{lemma}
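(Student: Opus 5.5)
The proof has two parts: a quantitative triangle-inequality gap for crossing segments with integer endpoints, and then a contradiction with 1-feasibility via Lemma~\ref{lem:ineq-for-opt-edge}.

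\textbf{Part 1: the gap inequality.} Let $o$ be the intersection point of $\overline{ab}$ and $\overline{cd}$. Then
\[\|a-b\|+\|c-d\| = (\|a-o\|+\|o-d\|)+(\|c-o\|+\|o-b\|).\]
By the triangle inequality, $\|a-o\|+\|o-d\|\ge\|a-d\|$ and $\|c-o\|+\|o-b\|\ge\|b-c\|$. Equality in both would force $o$ to lie on both $\overline{ad}$ and $\overline{bc}$. Since the four points are not collinear, at least one of these inequalities is strict.

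The real work is to make the strict slack quantitative, namely larger than $1/\Delta^{32}$. I would do this by algebraic separation for sums of square roots of integers. The quantity
\[D=\|a-b\|+\|c-d\|-\|a-d\|-\|b-c\|\]
is a signed sum of four square roots of integers $N_i\le 2\Delta^2$, and we already know $D>0$. Take the product of $D$ over all $2^4$ sign patterns of the conjugates. This product is a nonzero integer, provided $D$ itself is nonzero, which it is. Each of the other $15$ factors is at most $4\sqrt{2}\Delta$ in absolute value. Hence
\[D\ge (4\sqrt2\Delta)^{-15}.\]
The stated exponent $32$ is generous, so any such bound of the form $\Delta^{-O(1)}$ works once $\Delta$ is not too small (small $\Delta$ can be handled by adjusting constants).

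The case where $a,b,c,d$ are not collinear but three of them are, for example $c$ lying on $\overline{ab}$, needs care in defining $o$. I expect it to still yield strictness, because the points are not all collinear.

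\textbf{Part 2: eligible edges cannot cross.} Suppose $(a,b),(c,d)\in E_{\textsf{elig}}$ cross. Both are upper-layer bipartite edges, and crossing forces $a$ and $c$ to lie on the same side in the correct way. Relabel so that $a,c\in R$ and $b,d\in B$, swapping $c$ and $d$ if needed; the crossing condition is symmetric under that swap. Then $(a,d)$ and $(c,b)$ are also edges of the upper layer.

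By eligibility,
\[y_\theta(a)+y_\theta(b)+y_\theta(c)+y_\theta(d)>\|a-b\|+\|c-d\|-4/\Delta^{33}.\]
By 1-feasibility scaled by $\theta$, each edge $e$ satisfies
\[y_\theta(u)+y_\theta(v)\le\theta(\lceil\|u-v\|/\theta\rceil+1)\le\|u-v\|+2\theta.\]
Applying this to $(a,d)$ and $(c,b)$ gives the same sum of duals at most $\|a-d\|+\|b-c\|+4\theta$. Combining the two bounds with Part 1:
\[\|a-b\|+\|c-d\|-4/\Delta^{33}<\|a-d\|+\|b-c\|+4\theta<\|a-b\|+\|c-d\|-1/\Delta^{32}+4\theta.\]
This requires $1/\Delta^{32}<4/\Delta^{33}+4/(n\Delta^{33})$, which is false for $\Delta\ge 8$. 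This is a contradiction, so the segments do not cross.

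The main obstacle is Part 1's quantitative separation bound, including the degenerate three-collinear configurations.
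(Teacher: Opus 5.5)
The paper gives no proof of this lemma; it cites Sharathkumar. Your two-part structure is the natural one: a strict triangle inequality at the crossing point plus a separation bound for $D$, followed by playing the eligibility of $(a,b),(c,d)$ against the 1-feasibility of $(a,d),(c,b)$. Your Part~2 is correct whenever $\Delta\ge 4+4/n$.

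\textbf{The gap is in the separation step.} The product of $\epsilon_1\sqrt{N_1}+\epsilon_2\sqrt{N_2}+\epsilon_3\sqrt{N_3}+\epsilon_4\sqrt{N_4}$ over all $16$ sign patterns is an integer. But it is nonzero only if \emph{every} factor is nonzero, and $D\neq 0$ does not guarantee that. Take the diagonals of a square, $a=(1,1)$, $b=(3,3)$, $c=(1,3)$, $d=(3,1)$. Then $(N_1,N_2,N_3,N_4)=(8,8,4,4)$ and $D=4\sqrt2-4>0$, yet the factor $\sqrt8-\sqrt8+\sqrt4-\sqrt4$ vanishes. So the product is $0$ and gives no lower bound at all. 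The same happens for the diagonals of every lattice rectangle, since the diagonals are equal and $ad$, $bc$ are equal opposite sides. This is the typical crossing, not a corner case.

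\textbf{How to repair it.} Multiply only over genuine Galois conjugates. Let $K=\mathbb{Q}(\sqrt{N_1},\dots,\sqrt{N_4})$, a Galois extension of degree $2^m\le 16$. Each automorphism $\sigma$ sends $\sqrt{N_i}\mapsto\pm\sqrt{N_i}$, so $|\sigma(D)|\le 4\sqrt2\Delta$, and $\sigma(D)\neq 0$ because $\sigma$ is injective. The norm $\prod_\sigma\sigma(D)$ is Galois-fixed and an algebraic integer, hence a nonzero integer. This gives $|D|\ge(4\sqrt2\Delta)^{-(2^m-1)}\ge(4\sqrt2\Delta)^{-15}$.

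\textbf{Two lesser points.}
\begin{enumerate}
\item Your worry about three collinear points is unnecessary, and the same observation supplies the justification your ``at least one is strict'' claim lacks. Non-collinearity puts the two segments on distinct lines $\ell_1\neq\ell_2$ meeting only at $o$. If $o\in\overline{ad}\setminus\{a,d\}$, the line through $a$ and $d$ would equal both $\ell_1$ and $\ell_2$. Hence equality in both triangle inequalities would force $o\in\{a,d\}\cap\{b,c\}=\emptyset$.
\item Because the constant in the statement is fixed, you cannot ``adjust constants'' for small $\Delta$. The inequality $(4\sqrt2\Delta)^{-15}>\Delta^{-32}$ holds only for $\Delta\ge 5$, so $\Delta\le 4$ must be handled directly. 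For example, squared distances in $[4]^2$ have square-free parts in the group generated by $2,5,13$. So $[K:\mathbb{Q}]\le 8$ there, and $|D|\ge(4\sqrt2\Delta)^{-7}$, which is far above $\Delta^{-32}$ (and above $8/\Delta^{33}$, so Part~2 is covered for these $\Delta$ as well).
\end{enumerate}
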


We show that the non-crossing property enables us to compute an \emph{eligible edge decomposition} $P= \{(p_1, q_1), \ldots, (p_k, q_k)\}$, which is a set of non-crossing line segments, such that for every edge of $E_{\textsf{elig}}$,
it is contained in a segment of $P$. 
Note that $P$ induces a plane graph whose vertices come from $R\cup B$.

Since the definition of eligible edges in our setting is identical to that in Sharathkumar~\cite{DBLP:conf/compgeom/Sharathkumar13}, we can adopt their algorithm for computing the eligible edge decomposition. However, a minor technical hurdle arises in the initialization phase.
It uses an approximate \emph{perfect} matching $M$ obtained from the first phase, but in our setting, an approximate matching we have is not necessarily perfect if we look at the upper layer only. 
More specifically, it first computes all point-line pairs $(p,\ell)$ with $p\in R\cup B$ such that $\ell$ contains an eligible edge incident to $p$. The number of such pairs is $O(n)$. 
For a point $p\in R\cup B$, one can compute all eligible edges incident to it in $O(\polylog n)$ time per edge using a dynamic weighted nearest neighbor data structure.
However, it is possible that a single line $\ell$ contains many different eligible edges incident to $p$. Then a single line can be discovered  multiple times, which leads to quadratic running time in the worst case.
To avoid this, as initialization, it computes just one line $\ell$ for each point $p$ in $R\cup B$ containing an eligible edge incident to $p$ in $O(n\polylog n)$ time in total using $M$. Later, it considers each point $p$ in $B\cup R$ one by one and computes all pairs $(p,\ell)$ such that $\ell$ is not found in the initialization. Even in this case, a single line $\ell$ can be discovered multiple times, but the total number of pairs $(p,\ell)$ found in this way is $O(n)$, which leads to near-linear running time.
The perfect matching $M$ can be used in the initialization; since an edge $(u,v)$ of $M$ is eligible, its extension contains an eligible edge incident to $u$ (and $v$). But we can perform this initialization using a dynamic weighted nearest-neighbor data structure, and this does not increase the overall running time. The full details of this modified construction are deferred to Appendix~\ref{sec:eligible}.

\subparagraph*{Construction of the candidate subgraph.}
We now construct the candidate edge set $\tilde{E}_{\textsf{cand}}$ of size $O(n)$ using the eligible edge decomposition $P$. As mentioned earlier, if no four points of $R \cup B$ are collinear, we can simply construct $\tilde{E}_{\textsf{cand}}$ by adding all eligible edges (the edges of $P$), their mirrored edges in the lower layer, and all link edges between the layers. However, in the general case, the number of eligible edges can be $\Theta(n^2)$, making it impossible to include all of them.

Instead, we select a restricted subset of eligible edges as follows. For each pair $(p, q) \in P$, let $I$ be the set of points in $R \cup B$ lying on the open segment between $p$ and $q$. For a minimum-cost perfect matching $\tilde{M}^*$ of $\tilde G$, any edge incident to a vertex $v \in I$ must have its other endpoint in $I \cup \{ \hat{v}, p, q \}$. Consequently, for the endpoints $p$ and $q$, there are four possible configurations based on whether $p$ and $q$ are matched with points in $I$. 
For each configuration, we can identify the specific edges of $\tilde{M}^*$ incident to points in $I$ in $O(|I| \log |I|)$ time without computing the entire matching. More specifically, let $I' \subseteq \{p, q\}$ be the set of points matched with elements in $I$ under $\tilde{M}^*$. By solving the local matching problem for $I\cup I'$, we extract the necessary edges to be included in $\tilde{E}_{\textsf{cand}}$, ensuring the total size remains $O(n)$.
For the local matching problem,
since the vertices of $I\cup I'$ are collinear, the problem is restricted to one dimension. 
This problem reduces to the minimum-cost matching problem on $I\cup I'$, where leaving a vertex unmatched incurs a penalty of $\mu'(\cdot)$, where $\mu'(v)=\mu(v)$ for a vertex $v\in I$ and  $\mu'(v)=\infty$ for a vertex $v\in I'$.
Therefore, the local matching problem can be solved in $O(|I|\log |I|)$ time by Theorem~\ref{thm:matching-penalty}.
We do this for all the four configurations, and then take the union of the resulting edges.
Then $\tilde{E}_{\textsf{cand}}$ is defined as 
the union of all such edges in $P$, their mirrored edges in the lower layer, and all the link edges.
Then the size of $\tilde{E}_{\textsf{cand}}$ is $O(n)$, and it can be computed in $O(n\log n)$ time.

\medskip 
We show that $\tilde{E}_{\textsf{cand}}$ contains a minimum-cost perfect matching of $\tilde{G}$. For this, we use the following technical lemma justifying that focusing on the upper layer during the construction of $\tilde{E}_{\textsf{cand}}$ is sufficient.
We say a matching $M$ is \emph{symmetric} if
an edge $(u,v)$ in the upper layer is contained in $M$ if and only if its mirrored edge $(\hat{u}, \hat{v})$ is contained in $M$.

\begin{lemma}
\label{lem:symmetric-solution}
Let $\tilde{H}$ be a subgraph of $\tilde{G}$ induced by  edges in the upper layer, their mirrored edges in the lower layer, and all link edges. There exists a minimum-cost perfect matching $\tilde{M}^*$ of $\tilde{H}$ that is symmetric.
\end{lemma}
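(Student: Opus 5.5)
Start from an arbitrary minimum-cost perfect matching $\tilde{M}$ of $\tilde{H}$ and decompose it by layer. Let $M_U$ be the set of edges of $\tilde{M}$ in the upper layer, $M_L$ the set of edges in the lower layer, and $L$ the set of link edges of $\tilde{M}$. Let $U_0 \subseteq R\cup B$ be the set of upper vertices matched via link edges. Every upper vertex is matched either by an edge of $M_U$ or by its link edge, so $U_0 = (R\cup B)\setminus V(M_U)$. Likewise, a lower vertex $\hat v$ is matched either by an edge of $M_L$ or by its link edge $(v,\hat v)$, which happens exactly when $v\in U_0$. Hence the set of lower vertices covered by $M_L$ is $\{\hat v \mid v \notin U_0\} = \{\hat v \mid v\in V(M_U)\}$. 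The cost is $\tilde c(\tilde M) = \sum_{e\in M_U}\|e\| + \sum_{v\in U_0}\mu(v)$, because lower-layer edges cost $0$.

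Now define $\tilde M' := M_U \cup \widehat{M_U} \cup L$, where $\widehat{M_U} = \{(\hat u,\hat v) \mid (u,v)\in M_U\}$; that is, replace $M_L$ by the mirror image of $M_U$. This is the only modification. The steps to check, in order, are the following.
\begin{enumerate}
\item The mirrored edges are in $\tilde H$. This holds because $\tilde H$ contains the mirror of every upper edge it contains.
\item $\widehat{M_U}$ is a matching on exactly the lower vertex set $\{\hat v \mid v\in V(M_U)\}$. This is immediate since $M_U$ is a matching on $V(M_U)$.
\item This vertex set is exactly the set of lower vertices not covered by link edges of $L$. This follows from the count above, so $\tilde M'$ is a perfect matching of $\tilde H$.
\item The cost is unchanged. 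The lower edges in both $M_L$ and $\widehat{M_U}$ cost $0$, and $M_U$ and $L$ are untouched, so $\tilde c(\tilde M') = \tilde c(\tilde M)$.
\end{enumerate}
Therefore $\tilde M'$ is a minimum-cost perfect matching of $\tilde H$, and by construction its lower edges are precisely the mirrors of its upper edges, so it is symmetric.

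The step needing the most care is the vertex bookkeeping in the third check. We must make sure the lower vertices left uncovered by $L$ coincide exactly with the mirrors of $V(M_U)$, so that $\widehat{M_U}$ fits without conflict. This is where perfectness of $\tilde M$ is used: each $v$ is matched exactly once, either by $M_U$ or by its link edge, and each $\hat v$ is matched exactly once, either by $M_L$ or by that same link edge.

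Bipartiteness also needs a brief check: the mirror of an edge between $R$ and $B$ joins $\hat r$ and $\hat b$, which lie on opposite sides of the bipartition of $\tilde G$. No optimality argument beyond the zero cost of lower edges is needed.
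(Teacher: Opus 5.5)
Your proof is correct and follows the same route as the paper. Both start from an arbitrary optimal perfect matching of $\tilde H$ and replace its lower-layer edges by the mirror images of its upper-layer edges; this preserves perfectness and cost because lower-layer edges cost zero. Your vertex bookkeeping, showing that the lower vertices not covered by link edges are exactly $\{\hat v \mid v \in V(M_U)\}$, is more precise than the paper's count, which speaks of $m$ uncovered lower vertices matched by $m/2$ edges where $2m$ vertices and $m$ edges are meant.
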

\begin{proof}
Let $\tilde{M}^*$ be an arbitrary minimum-cost perfect matching of $\tilde{H}$. Let $m$ be the number of edges in $\tilde{M}^*$ that belong to the upper layer. Each such edge $(u, v)$ covers one vertex in $R$ and one in $B$. Because $\tilde{M}^*$ is a perfect matching, any vertex in the upper layer not matched with a vertex in the upper layer must be matched to its corresponding copy in the lower layer via a link edge. 
This implies that exactly $m$ vertices in the lower layer  are not covered by link edges. To satisfy the perfect matching requirement, these $m$ vertices must be matched to each other using $m/2$ edges in the lower layer. 

Now, consider the cost of these $m/2$ edges. By construction, the cost of any edge in the lower layer is zero. Therefore, we can replace the existing lower-layer edges in $\tilde{M}^*$ with the ``mirrored'' edges of the upper-layer matching without changing the total cost of the matching. This substitution results in a new perfect matching $M'$ that remains optimal and satisfies the symmetry condition: $(u, v)$ is in $M'$ if and only if $(\hat{u}, \hat{v})$ is in $M'$.
Therefore, the lemma holds.
\end{proof}

\begin{lemma}
\label{lem:candidate-correctness}
There exists a minimum-cost perfect matching $\tilde{M}^*$ of $\tilde{G}$ such that $\tilde{M}^* \subseteq \tilde{E}_{\textsf{cand}}$.
\end{lemma}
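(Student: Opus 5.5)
We argue by an exchange argument: start from an arbitrary optimal matching and rewrite it, segment by segment, until every edge it uses lies in $\tilde{E}_{\textsf{cand}}$, without ever increasing the cost.

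\textbf{Setting up a symmetric optimum.} Let $\tilde{H}$ be the subgraph of $\tilde{G}$ formed by the eligible upper-layer edges, their mirrored copies in the lower layer, and all link edges. By Lemma~\ref{lem:ineq-for-opt-edge}, every upper-layer edge of any minimum-cost perfect matching of $\tilde{G}$ is eligible. Hence a minimum-cost perfect matching of $\tilde{G}$ can be mirrored into $\tilde{H}$, and so an optimum of $\tilde{H}$ is an optimum of $\tilde{G}$. By Lemma~\ref{lem:symmetric-solution} we may take this optimum $\tilde{M}^*$ to be symmetric. The upper-layer part $M_U$ of $\tilde{M}^*$ then consists of eligible edges, and $\tilde{M}^*$ is determined by $M_U$: every upper vertex not covered by $M_U$ is matched by its link edge, and the lower layer is the mirror of $M_U$. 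Since link edges always lie in $\tilde{E}_{\textsf{cand}}$, and mirrored edges lie there whenever the originals do, it suffices to modify $M_U$ so that every upper edge lies in $\tilde{E}_{\textsf{cand}}$ while keeping the cost of the induced perfect matching unchanged.

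\textbf{Segment-by-segment replacement.} Each eligible edge is contained in a segment $(p,q)\in P$. For a fixed segment with interior point set $I$, any edge of $M_U$ incident to $v\in I$ is eligible and passes through $v$, which lies in the relative interior of the segment $\overline{pq}$. Because the segments of $P$ are non-crossing, such an edge must lie on $\overline{pq}$ itself, so its other endpoint is in $I\cup\{p,q\}$. Let $I'\subseteq\{p,q\}$ be the endpoints matched into $I$. The restriction of $\tilde{M}^*$ to $I\cup I'$ and their copies is a feasible solution of the 1D matching-with-penalties instance on $I\cup I'$ with penalties $\mu'$. The constraint $\mu'=\infty$ on $I'$ encodes that the vertices of $I'$ must be matched into $I$, and the link edge at $v\in I$ accounts for penalty $\mu(v)$. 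The cost contributed by these vertices equals the objective of that instance, and it is independent of the rest of the matching. We replace this restricted part by the optimal 1D solution computed for the configuration $I'$, which was included in $\tilde{E}_{\textsf{cand}}$. The cost does not increase, the vertices outside $I\cup I'$ are unaffected, and the vertices of $I'$ remain matched into $I$, so perfectness is preserved. We then re-symmetrize the lower layer.

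Edges of $M_U$ joining two points of $\{p,q\}$ with no interior point are the segment $(p,q)$ itself, which is in $\tilde{E}_{\textsf{cand}}$. An edge $(p,q)$ that passes over interior points while both endpoints are matched to each other corresponds to the configuration $I'=\emptyset$ with $p$ and $q$ matched together. I would handle this case either by noting that the four configurations include it, or by uncrossing on the line: matching $p$ and $q$ across interior points is never worse than matching nested pairs (a 1D triangle-inequality swap). Verifying this case carefully is the main obstacle.

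\textbf{Independence of the replacements.} The replacements for different segments involve disjoint sets of interior vertices. An endpoint can belong to several segments, but the configuration $I'$ is fixed by $\tilde{M}^*$, so each endpoint is matched into at most one interior, consistently with $\tilde{M}^*$. The replacements therefore commute, and after performing them for all segments we obtain an optimal perfect matching contained in $\tilde{E}_{\textsf{cand}}$.
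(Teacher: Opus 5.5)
Your route is the paper's: a symmetric optimum with eligible upper edges, localization to the segments of $P$, and an appeal to the four precomputed configurations. Your explicit replacement step is more careful than the paper's proof. The paper only asserts that the edges of $\tilde M^*$ lie in the union of the local optima, and since local optima need not be unique, swapping in the computed optimum is what actually justifies that.

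The one genuine hole is the case you flag and leave open: $(p,q)\in M_U$ for a segment whose interior $I$ is nonempty. The paper is silent on this case as well. It closes with your first option.

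\begin{itemize}
\item Neither $p$ nor $q$ is matched into $I$, yet the restriction of $\tilde M^*$ to $I\cup\{p,q\}$ is still a feasible solution of equal cost for the 1D instance of configuration $I'=\{p,q\}$. That restriction consists of the edge $(p,q)$, the pairs inside $I$, and the link edges of the remaining points of $I$. It is feasible because that instance is matching with penalties on the whole point set $I\cup\{p,q\}$ with $\mu'(p)=\mu'(q)=\infty$, so it allows the edge $pq$.
\item The construction computes the optimum of that configuration regardless of what $\tilde M^*$ does, so your replacement goes through verbatim.
\item Accordingly, weaken ``the vertices of $I'$ remain matched into $I$'' to ``the vertices of $I'$ remain covered by edges inside the segment.'' The optimum for $\{p,q\}$ may itself use $pq$, and your independence argument only needs each endpoint to be matched within at most one segment.
\end{itemize}

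Finally, your claim that $(p,q)\in\tilde{E}_{\textsf{cand}}$ when $I=\emptyset$ presumably relies on the segments of $P$ themselves being candidate edges, since the local instances are empty there. Under that reading, the case $I\neq\emptyset$ is immediate too.
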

\begin{proof}
Let $\tilde{M}^*$ be a minimum-cost perfect matching of $\tilde{G}$ such that (i) all edges of $\tilde{M}^*$ belonging to the upper layer are contained in $E_{\textsf{elig}}$, and (ii) $\tilde M^*$ is symmetric, which always exists by Lemma~\ref{lem:ineq-for-opt-edge} and by Lemma~\ref{lem:symmetric-solution}.

To show that $\tilde{M}^* \subseteq \tilde{E}_{\textsf{cand}}$, we consider the eligible edge decomposition $P$. Every edge $(u,v) \in \tilde{M}^*$ in the upper layer must be contained within some segment $\overline{p q} \in P$. For each such segment, let $I$ be the set of points of $B\cup R$ lying in the open interval between $p$ and $q$. By the construction of $P$, any edge in $\tilde{M}^*$ incident to a vertex in $I$ must have its other endpoint in $I \cup \{\hat{v}, p, q\}$. 
This local constraint implies that the matching within each segment $\overline{p q}$ can be optimized independently based on the four possible configurations of the endpoints $p$ and $q$. Since $\tilde{E}_{\textsf{cand}}$ is constructed by taking the union of the optimal local matchings for all four configurations across all segments in $P$, it follows that the edges of $\tilde{M}^*$ are included in $\tilde{E}_{\textsf{cand}}$. Thus, the subgraph of $\tilde G$ induced by $\tilde{E}_{\textsf{cand}}$ contains a minimum-cost perfect matching of $\tilde{G}$.
\end{proof}

Therefore, the candidate edge set $\tilde{E}_{\textsf{cand}}$ can be computed in $\tilde{O}(n)$ time in total, and the problem reduces to computing a minimum-cost perfect matching on the subgraph $\tilde{G}_{\textsf{cand}}$ of $\tilde G$ induced by $\tilde{E}_{\textsf{cand}}$. For an illustration of $\cand$, see Figure~\ref{fig:d-n-c-algo}.

It is worth noting the structural distinction between our construction and the one-to-one setting addressed by Sharathkumar~\cite{DBLP:conf/compgeom/Sharathkumar13}. In the one-to-one case, for each segment $(p,q) \in P$, it is sufficient to consider at most \emph{two} configurations of the endpoints $p$ and $q$, which makes the resulting graph planar. This is because 
the difference between the numbers of points in $R$ and of points in $B$ restricts the possible configurations of $p$ and $q$. In the many-to-many setting, however, this difference does not inherently reduce the number of configurations. Consequently, we must consider the union of local optimal matchings for all \emph{four} configurations, which renders the resulting candidate graph $\tilde{G}_{\textsf{cand}}$ non-planar.

\subsection{Divide-and-Conquer Algorithm via Planar Separators}
\label{sec:exact-algo-for-mcpm:dnc-algo}
In this section, we present a divide-and-conquer algorithm to compute a minimum-cost perfect matching of the candidate graph $\tilde{G}_{\textsf{cand}} = (\tilde{V}, \tilde{E}_{\textsf{cand}})$. 
Our algorithm recursively partitions $\tilde{G}_{\textsf{cand}}$ into disjoint vertex sets by removing a subset of vertices.
We first compute the optimal matching for the subgraph induced by the disjoint sets.
Subsequently, in the conquer phase, we reintroduce the removed vertices and gradually update the matching to obtain a minimum-cost perfect matching of $\tilde{G}_{\textsf{cand}}$ using augmenting paths.

We begin by recalling the planar separator theorem, which serves as the foundation for our graph partitioning strategy.
Lipton and Tarjan \cite{lipton1979separator} established the following result for planar graphs.

\begin{theorem}[\cite{lipton1979separator}]
\label{thm:planar-separator}
Let $H$ be a planar graph with $n_H$ vertices.
The vertices of $H$ can be partitioned in $O(n_H)$ time into three disjoint sets $X, Y, S$ such that (i) no edge connects a vertex in $X$ with a vertex in $Y$, (ii) $|X|, |Y| \le 2n_H/3$, and (iii) $|S| \le 2\sqrt{2}\sqrt{n_H}$.
\end{theorem}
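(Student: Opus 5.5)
The statement is the classical Lipton--Tarjan planar separator theorem. The plan is to follow their argument via breadth-first search levels and a fundamental cycle of a triangulation. Throughout, we may assume $H$ is connected; otherwise we handle components separately. If some component has more than $2n_H/3$ vertices, we separate that component. If no component does, we group the components greedily into $X$ and $Y$ with $S=\emptyset$. We first compute a planar embedding in $O(n_H)$ time, for instance by Hopcroft--Tarjan.

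\textbf{Step 1: BFS levels.} Pick a root $r$ and compute BFS levels $L_0,\dots,L_t$, each in $O(n_H)$ time. Let $\ell_1$ be the level at which the cumulative count first reaches $n_H/2$. Then choose $\ell_0\le \ell_1$ and $\ell_2> \ell_1$ so that
\[
|L_{\ell_0}|+2(\ell_1-\ell_0)\le 2\sqrt{n_H}, \qquad |L_{\ell_2}|+2(\ell_2-\ell_1-1)\le 2\sqrt{n_H}.
\]
Such levels exist by a counting argument. If every level in the window $[\ell_1-\sqrt{n_H},\ell_1]$ had more than $2(\ell_1-\ell)$-type size, the total number of vertices would exceed $n_H$. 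Removing $L_{\ell_0}\cup L_{\ell_2}$ splits the graph into three parts: the levels below $\ell_0$, the levels between $\ell_0$ and $\ell_2$, and the levels above $\ell_2$. The outer two parts each have at most $n_H/2$ vertices. If the middle part also has at most $2n_H/3$ vertices, we are done, after grouping the parts into $X$ and $Y$.

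\textbf{Step 2: shrink the middle part.} Otherwise, delete all levels above $\ell_2$. Contract all levels at or below $\ell_0$ into a single vertex, which is possible because that subgraph is connected via the BFS tree. The result is a planar graph with a spanning tree of depth at most $\ell_2-\ell_0-1$. Triangulate it in linear time. Every nontree edge then closes a fundamental cycle with at most $2(\ell_2-\ell_0)-1$ vertices.

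\textbf{Step 3: find a good fundamental cycle.} By the standard lemma, some fundamental cycle has at most $2/3$ of the vertices strictly inside it and at most $2/3$ strictly outside. To find it, start from any nontree edge and repeatedly move across a triangle toward the heavier side. Maintain the inside counts incrementally, using the four cases of the Lipton--Tarjan argument, so the whole search runs in $O(n_H)$ total time.

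\textbf{Step 4: assemble the separator.} Take $S$ to be $L_{\ell_0}\cup L_{\ell_2}$ together with the cycle vertices, excluding the contracted root. Its size is at most $|L_{\ell_0}|+|L_{\ell_2}|+2(\ell_2-\ell_0-1)$. Using the two inequalities from Step 1, this is $O(\sqrt{n_H})$, and careful balancing of constants gives $2\sqrt{2}\sqrt{n_H}$. Finally, group the pieces greedily into $X$ and $Y$, each of size at most $2n_H/3$.

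The main obstacle is Step 3. The inside/outside bookkeeping must stay linear-time, and the constant must come out exactly $2\sqrt2$. Since this is a cited classical result, in the paper I would simply invoke~\cite{lipton1979separator} rather than reprove it.
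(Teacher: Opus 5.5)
Your conclusion matches the paper: it gives no proof of this theorem and simply invokes Lipton and Tarjan, which is what you propose to do. If you kept the sketch, note that with the threshold $2\sqrt{n_H}$ in both level inequalities of Step 1, Step 4 only yields $|S|\le 4\sqrt{n_H}$, so to obtain $2\sqrt{2}\sqrt{n_H}$ you must use the thresholds $2\sqrt{k}$ and $2\sqrt{n_H-k}$, where $k=|L_0\cup\cdots\cup L_{\ell_1}|$, proving existence by the same counting argument applied separately to the levels $\le \ell_1$ and $>\ell_1$, and then conclude via $\sqrt{k}+\sqrt{n_H-k}\le\sqrt{2n_H}$ (also, Step 2 must delete $L_{\ell_2}$ itself for the depth bound $\ell_2-\ell_0-1$ to hold).
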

We call $S$ a \emph{balanced separator} of $H$.
While $\tilde{G}_{\textsf{cand}}$ is not necessarily planar, its inherent planar structure is derived from the eligible edge decomposition $P$.
Let $V(P)$ be the set of endpoints of the edges (segments) in $P$, and let $G(P) = (V(P), P)$ be the planar skeleton defined by $P$, that is, it is the planar graph induced by the edges of $P$.

\begin{figure}
    \centering
    \includegraphics[width=1\linewidth]{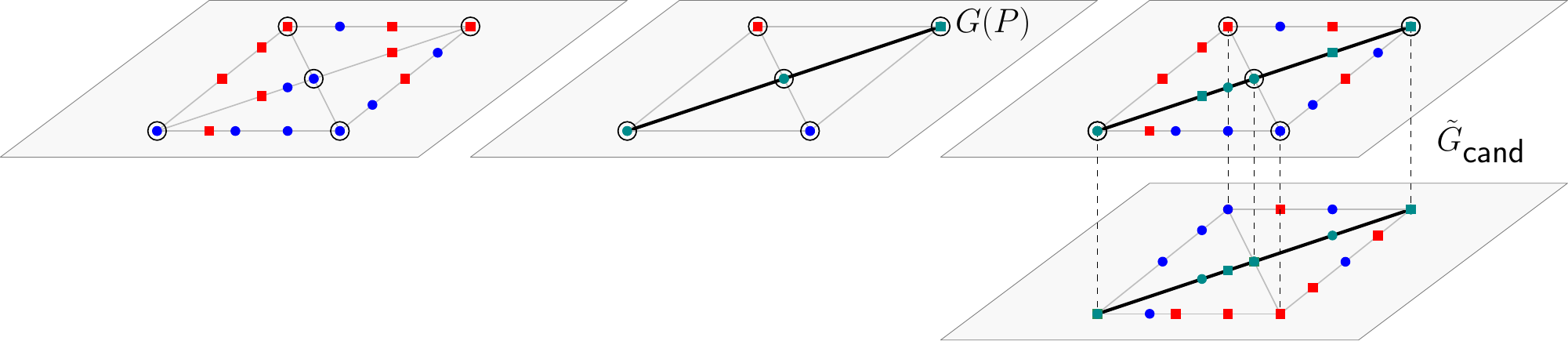}
    \caption{Illustration of the upper layer of $\cand$ (left), the planar skeleton $G(P)$ (middle), and the candidate subgraph $\tilde{G}_{\textsf{cand}}$ (right). Red squares and blue disks denote vertices in $R\cup \hat B$ and vertices in $B\cup \hat R$, respectively. In the upper layer of $\cand$, the segment endpoints represent skeleton vertices, while intermediate points on the segments represent interior vertices. The green vertices in $G(P)$ constitute the balanced separator of $G(P)$, and those in $\tilde{G}_{\textsf{cand}}$ form the prism separator $\tilde{S}$.}
    \label{fig:d-n-c-algo}
\end{figure}

We apply the planar separator theorem to $G(P)$ to induce a separation of $\tilde{G}_{\textsf{cand}}$.
Let $X, Y, S$ be a partition of $V(P)$ obtained by applying Theorem \ref{thm:planar-separator} to $G(P)$, where $S$ is a balanced separator of $G(P)$.
Let $\hat X, \hat Y$ and $\hat S$ denote the mirrored sets of $X, Y$ and $S$, respectively, in the lower layer.
We make the following observation regarding the sizes of these sets.
\begin{observation}
\label{obs:balanced-boundaries}
Let $\hat V(P)$ be the mirrored set of $V(P)$ in the lower layer.
Then, $|X \cup \hat X|, |Y \cup \hat Y| \le 2/3 \cdot (|V(P)| + |\hat V(P)|)$, and $|S\cup \hat S| \le 2\sqrt{2}(|V(P)| + |\hat V(P)|)^{1/2}$.
\end{observation}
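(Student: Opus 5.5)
The plan is to use the fact that mirroring is a bijection between the upper and lower layers. Write $N := |V(P)|$. Since $\hat V(P)$ is the mirror image of $V(P)$, we have $|\hat V(P)| = N$, so $|V(P)| + |\hat V(P)| = 2N$. The sets $X,\hat X$ are disjoint because they lie in different layers, and mirroring preserves cardinality. Hence $|X\cup \hat X| = 2|X|$, and likewise $|Y\cup\hat Y| = 2|Y|$ and $|S\cup\hat S| = 2|S|$. Each claimed inequality therefore reduces to a statement about the partition $X,Y,S$ of $V(P)$ returned by Theorem~\ref{thm:planar-separator}.

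For the balance part I apply condition (ii) of Theorem~\ref{thm:planar-separator} to $H = G(P)$, where $n_H = N$. This gives $|X\cup\hat X| = 2|X| \le 2\cdot \tfrac{2}{3}N = \tfrac{2}{3}\,(|V(P)|+|\hat V(P)|)$, and the same bound holds for $Y$. This step is routine.

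The separator bound is the step I expect to be the main obstacle. Condition (iii) gives $|S\cup\hat S| = 2|S| \le 4\sqrt{2}\,\sqrt N$. The target is $2\sqrt2\,(2N)^{1/2} = 4\sqrt N$, so plain doubling loses a factor of $\sqrt 2$. In other words, the stated constant requires $|S| \le 2\sqrt N$, which is stronger than the $2\sqrt2\sqrt N$ that Theorem~\ref{thm:planar-separator} as stated provides.

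I would close this gap by taking the partition $X,Y,S$ from a $2/3$-balanced planar separator theorem with a smaller constant. Djidjev and Venkatesan give such a separator of size at most $1.97\sqrt{n_H}$, computable in linear time. With it, $2|S| \le 3.94\sqrt N \le 4\sqrt N = 2\sqrt2\,(|V(P)|+|\hat V(P)|)^{1/2}$, and the balance condition (ii) is unchanged, so the first part of the argument still applies. The only thing to verify is that this improved separator keeps properties (i) and (ii) and runs in $O(N)$ time, so the rest of the divide-and-conquer is unaffected.

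If one insists on using Theorem~\ref{thm:planar-separator} verbatim, the honest conclusion is the bound $|S\cup\hat S| \le 4\,(|V(P)|+|\hat V(P)|)^{1/2}$. This has the same $O(\sqrt{\cdot})$ form, and that form is all the recursion analysis uses.
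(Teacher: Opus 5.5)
Your doubling argument is the natural one, and since the paper states Observation~\ref{obs:balanced-boundaries} without proof, it is the only argument available. Mirroring is a bijection onto a disjoint layer, so all three cardinalities double, and the two balance bounds follow exactly as you say.

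You are also right about the separator bound. Doubling gives only
$|S\cup\hat S| = 2|S| \le 4\sqrt{2}\sqrt{|V(P)|} = 4\,(|V(P)|+|\hat V(P)|)^{1/2}$,
whereas the stated right-hand side equals $4\sqrt{|V(P)|}$. So the constant $2\sqrt2$ in the observation does not follow from Theorem~\ref{thm:planar-separator}; it is a slip in the statement, not a gap in your reasoning. Your mirrored reading of the partition is also the one the construction needs. Applying Theorem~\ref{thm:planar-separator} to the disjoint union of $G(P)$ and its mirror, a planar graph on $2|V(P)|$ vertices, would produce the stated constant. But that separator need not be mirror-symmetric, and a link edge $(v,\hat v)$ with $v$ and $\hat v$ on opposite sides would break Lemma~\ref{lem:separator-of-cand-subgraph}.

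On your repair, be careful with the citation. As I recall it, the Djidjev--Venkatesan bound is $(\sqrt{2/3}+\sqrt{4/3})\sqrt{n}+O(1)$ and is stated for maximal planar graphs; please check the exact form. Triangulating $G(P)$ adds no vertices and is harmless. The additive term is the problem: it means $2|S|\le 4\sqrt{|V(P)|}$ is guaranteed only for sufficiently large $|V(P)|$. You would then have to handle a finite range of small cases separately. The trivial choice $Y=\emptyset$, $|S|=\lceil |V(P)|/3\rceil$ works only up to $|V(P)|=36$, which leaves an intermediate range uncovered.

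The cleaner resolution is the one in your last paragraph: restate the observation with constant $4$. The complexity analysis uses only that the number of skeleton separator vertices is $O(\sqrt{\eta_{d,i}})$, so nothing downstream changes.
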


Using the partition $(X,Y,S)$ of $V(P)$, we 
partition the vertex set of $\cand$ into $\tilde X,\tilde Y$ and $\tilde S$ with $X\subseteq \tilde X, Y\subseteq \tilde Y$ and $S\subseteq \tilde S$ as follows. 
For this, it suffices to partition the vertices of $\cand$
contained in the open segments of $P$. We call such vertices \emph{interior vertices}.
For each interior vertex $v$ of $\cand$, we assign it to the set based on the endpoints of the segment of $P$ containing $v$. Specifically, if the segment of $P$ containing $v$ has an endpoint in $X$ (in $Y$), we put $v$ and $\hat v$ to $\tilde X$ (to $\tilde Y$). 
Otherwise, both endpoints of the segment are contained in $S$. In this case, we put $v$ and $\hat v$ to $\tilde S$.
Here, notice that no segment of $P$ has endpoints both in $X$ and $Y$ since $S$ separates $X$ and $Y$.

\begin{lemma}
\label{lem:separator-of-cand-subgraph}
No edge in $\tilde{E}_{\textsf{cand}}$ connects a vertex in $\tilde{X}$ to a vertex in $\tilde{Y}$.
\end{lemma}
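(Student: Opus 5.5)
The plan is a case analysis over the three kinds of edges in $\tilde{E}_{\textsf{cand}}$: link edges, upper-layer edges, and their lower-layer mirrors. In each case we use the rule that assigns interior vertices to $\tilde X$, $\tilde Y$, $\tilde S$, together with the separation property of $(X,Y,S)$ in $G(P)$ from Theorem~\ref{thm:planar-separator}.

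\emph{Link edges.} A link edge has the form $(v,\hat v)$. By construction, $v$ and $\hat v$ always go to the same part of the partition. For a skeleton vertex $v\in X$ (resp.\ $Y$, $S$), the mirror $\hat v$ lies in $\hat X\subseteq\tilde X$ (resp.\ $\tilde Y$, $\tilde S$). For an interior vertex, the assignment rule places $v$ and $\hat v$ together explicitly. So no link edge joins $\tilde X$ and $\tilde Y$.

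\emph{Upper-layer edges.} By construction of $\tilde{E}_{\textsf{cand}}$, every upper-layer edge $(u,w)$ is an edge of a local matching computed on a single segment $\overline{pq}\in P$. Hence both $u$ and $w$ lie in $I\cup\{p,q\}$ for that segment, where $I$ is the set of interior points of $\overline{pq}$. Since $(p,q)$ is an edge of $G(P)$, Theorem~\ref{thm:planar-separator} rules out $p\in X$ with $q\in Y$ (and vice versa). This leaves three sub-cases.
\begin{itemize}
\item If $\{p,q\}\cap X\neq\emptyset$, then every interior vertex of $\overline{pq}$ is placed in $\tilde X$, and the endpoints lie in $X\cup S$. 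So $u,w\in \tilde X\cup\tilde S$.
\item If $\{p,q\}\cap Y\neq\emptyset$, the symmetric argument gives $u,w\in \tilde Y\cup\tilde S$.
\item If $p,q\in S$, then all of $I\cup\{p,q\}$ lies in $\tilde S$.
\end{itemize}
In every sub-case, at most one of $\tilde X$, $\tilde Y$ is touched, so the edge does not cross between them.

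\emph{Lower-layer edges.} Each is the mirror $(\hat u,\hat w)$ of an upper-layer edge. Mirroring preserves part membership, because $\hat v$ is always assigned to the same part as $v$. So this case reduces to the previous one.

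\emph{Main obstacle.} The only delicate point is that the assignment rule must be well defined, meaning each interior vertex lies in exactly one open segment of $P$. I would justify this from the non-crossing property of $P$: segments meet only at endpoints, so distinct segments cannot share an interior point. I would also check that "has an endpoint in $X$" and "has an endpoint in $Y$" never hold simultaneously, which is exactly the separation property of $S$ used above.
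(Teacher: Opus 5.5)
Your proof is correct and follows the paper's argument: a case split over upper-layer, lower-layer, and link edges. It uses the same two facts as the paper, namely that each upper-layer candidate edge lies on a single segment of $P$ whose endpoints cannot be split between $X$ and $Y$, and that $v$ and $\hat v$ always go to the same part. Your remark on well-definedness (each interior vertex lies in exactly one open segment because $G(P)$ is a plane straight-line graph) makes explicit an assumption the paper uses without stating it.
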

\begin{proof}
We prove this by examining the three types of edges $e$ in $\tilde{E}_{\textsf{cand}}$.
In the first case that $e$ belongs to the upper layer, 
every eligible edge in the upper layer is contained in a segment $s$ in $P$. 
If an endpoint of $s$ is contained in $X$ (or in $Y$),  
both endpoints of $e$ are in $\tilde X$ (or $\tilde Y$), or one belongs to $\tilde{X}$ (or $\tilde Y$) and the other to $\tilde{S}$. 
If both endpoints of $s$ are contained in $S$,
both endpoints of $e$ are contained in $\tilde S$. 
Thus in any case, the lemma holds.

In the other cases, the lemma holds 
due to the symmetric construction of $\tilde{E}_{\textsf{cand}}$ following Lemma \ref{lem:symmetric-solution}.
Specifically, a vertex $v\in B\cup R$ and its copy $\hat v$ belong to the same set among $\tilde X, \tilde Y$ and $\tilde S$. Therefore, 
in the case that $e$ belongs to the lower layer, or $e$ is a link edge, the lemma holds immediately.
\end{proof}
We refer to $\tilde{S}$ as the \emph{prism separator} of $\tilde{G}_{\textsf{cand}}$.
We classify each vertex of the prism separator $\tilde{S}$ into two types.
We call a vertex of $S \cup \hat S$ a \emph{skeleton vertex} of $\tilde S$ and a vertex of $\tilde S \setminus (S\cup \hat S)$ an \emph{interior vertex} of $\tilde S$.
See Figure~\ref{fig:d-n-c-algo}.

\subparagraph*{Divide-and-Conquer Algorithm.}
We now present a detailed divide-and-conquer algorithm for computing a minimum-cost perfect matching of $\tilde{G}_{\textsf{cand}}$. The algorithm first computes the partition $(\tilde{X}, \tilde{Y}, \tilde{S})$ of the vertex set in linear time and recursively finds minimum-cost perfect matchings for the subgraphs induced by $\tilde{X}$ and $\tilde{Y}$. Note that the subgraph induced by $\tilde X$ (and $\tilde Y$) has a perfect matching due to its prism structure. 
Since no edges connect $\tilde{X}$ and $\tilde{Y}$, the union of their optimal matchings is a minimum-cost perfect matching for the subgraph induced by $\tilde{X} \cup \tilde{Y}$. To obtain the optimal matching for the entire graph, one might consider extending this matching by adding all vertices of the separator $\tilde{S}$ one by one using the following lemma.
Here, a \emph{minimum-cost maximum-cardinality matching} is defined as the matching with the minimum total cost among all matchings of maximum possible cardinality. 

\begin{lemma}[\cite{DBLP:journals/siamcomp/LiptonT80}]
\label{lem:augment-matching}
Let $H$ be a graph with $n_H$ vertices and $m_H$ edges associated with edge costs, and let $v$ be a vertex of $H$. Given a minimum-cost maximum-cardinality matching of $H - \{v\}$, a minimum-cost maximum-cardinality matching of $H$ can be computed in $O(m_H \log n_H)$ time.
\end{lemma}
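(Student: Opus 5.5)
The plan is to reduce the update to a single shortest-alternating-path computation rooted at $v$, and to carry out that computation with one search stage of the primal--dual (Edmonds/Gabow-type) weighted matching algorithm. Throughout, let $M'$ denote the given minimum-cost maximum-cardinality matching of $H-\{v\}$, and let $M$ be a minimum-cost maximum-cardinality matching of $H$.

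\textbf{Structural step.} First I would bound the cardinality: the maximum matching size of $H$ is either $|M'|$ or $|M'|+1$, since deleting the edge of $M$ at $v$ gives a matching of $H-\{v\}$. Next I would analyse $M\oplus M'$, which is a disjoint union of alternating paths and cycles, and take $M$ with $|M\oplus M'|$ minimal among optimal choices. Every component avoiding $v$ lies in $H-\{v\}$. Flipping such a component inside $M'$ yields a matching of $H-\{v\}$, and flipping it inside $M$ yields a matching of $H$. These operations respect the cardinality constraints in both graphs, because a component not containing $v$ whose flip would raise $|M'|$ is an $M'$-augmenting path in $H-\{v\}$, which is impossible. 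Optimality of $M'$ then gives $c(C\cap M')\le c(C\cap M)$, and optimality and minimality of $M$ give the reverse inequality with strictness, so no such component exists. Hence $M=M'\oplus Q$ for a single alternating path $Q$ starting at $v$ (possibly empty). The path $Q$ is either an $M'$-augmenting path from $v$ to a free vertex, when cardinality increases, or an even alternating path from $v$ ending at a vertex matched in $M'$, when cardinality stays the same. So it suffices to find, among admissible alternating paths from $v$, one minimizing $c(Q\setminus M')-c(Q\cap M')$. Augmenting paths take priority whenever one exists; this can be enforced by adding a large constant $K$ to every edge cost, which makes $|M|$ dominate. The empty path is also allowed.

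\textbf{Algorithmic step.} I would keep LP dual variables $y$ (and blossom duals $z$ for general graphs) certifying optimality of $M'$ in $H-\{v\}$; these are either available from the computation that produced $M'$, or the lemma is applied within a scheme that maintains them. For the new vertex $v$ I would set $y(v)=\min_{(v,u)\in E(H)}\bigl(c(v,u)-y(u)\bigr)$, so that every edge at $v$ has nonnegative reduced cost and at least one is tight. With $v$ as the only free root, I would then run one search stage of Edmonds' weighted matching algorithm: grow an alternating forest along tight edges, shrink blossoms, expand blossoms whose $z$ becomes $0$, and make dual adjustments. The stage ends either by augmenting along the path found, or when $v$'s dual reaches the threshold at which staying unmatched (with $v$'s weight) is optimal, in which case the best even alternating path is applied. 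Using Gabow's implementation with priority queues and split-findmin for blossoms, a single stage costs $O(m_H\log n_H)$. Correctness follows from complementary slackness: at termination the duals are feasible for $H$ and the matching is tight, so by the structural step it is optimal.

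\textbf{Main obstacle.} The hard part is the non-augmenting case in general graphs. There the optimum may be an even alternating path ending at a matched vertex, so the stage must be stopped correctly, and the blossom bookkeeping must be shown to still run in a single $O(m_H\log n_H)$ stage. I would handle this with the standard trick of attaching an artificial partner $v'$ to $v$, connected by an edge of cost $K$. This turns both cases into a single perfect-matching augmentation, so one Edmonds stage suffices, and the cost bound of a single stage can be quoted from Gabow.
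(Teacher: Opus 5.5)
The paper gives no proof of this lemma; it only cites Lipton and Tarjan. The argument behind that citation is a single search stage of Edmonds' primal--dual weighted matching algorithm, with $v$ as the only root and an optimal dual solution for $H-\{v\}$ in hand, so your overall route is the standard one. Your structural step ($M=M'\oplus Q$ for one alternating path $Q$ from $v$) is correct. The cardinality bookkeeping also needs the symmetric fact that no component avoiding $v$ is $M$-augmenting, which you leave implicit. You are right to flag the duals. The lemma as stated does not supply them, and without them, finding the best alternating path from $v$ is a shortest-path problem with negative arc costs, for which no $O(m_H\log n_H)$ method is known. So the duals must be carried through the recursion.

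The genuine gap is the non-augmenting case. You correctly identify it as the crux, but the reduction you use to resolve it fails. Suppose $v'$ is adjacent only to $v$. Then every maximum matching of $H+v'$ has $|M'|+1$ edges, so it either contains $(v,v')$ or is a matching of $H$ of size $|M'|+1$. The outcome $M'\oplus Q$ with $Q$ an even alternating path has only $|M'|$ edges, so it is never selected. For example, take vertices $v,a,b$ with $c(a,b)=10$ and $c(v,a)=1$. Then $M'=\{(a,b)\}$ and the optimum of $H$ is $\{(v,a)\}$, but the unique maximum matching of $H+v'$ is $\{(v,v'),(a,b)\}$, which returns $M'$. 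A perfect-matching reduction only works if every vertex gets its own partner, as in a doubling construction.

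The fix is to stay in the non-perfect LP, maximizing weights $K-c(e)$ (equivalently, minimizing $c(e)-K$). Note that \emph{adding} $K$ to the costs, as you wrote, favours fewer edges, not more. In this LP the duals are sign-constrained, and by complementary slackness every $M'$-free vertex sits at the free level $0$. The stage then runs as follows.

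Initialize $y(v)$ as you did, but capped at that level; if the cap binds, $M'$ is already optimal.

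Grow one alternating tree from $v$.

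Stop at a tight edge to an $M'$-free vertex, and augment.

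Or stop as soon as the dual of \emph{any} outer vertex $w$ of the tree reaches $0$, not only that of $v$. Then flip the even alternating path from $v$ to $w$ (through blossoms as usual; it is empty if $w=v$).

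This is Edmonds' $\delta_1$ rule, and it keeps all duals feasible. It also leaves the newly exposed vertex with dual $0$. That condition is what your stated certificate lacks: feasibility plus tightness alone does not certify optimality for non-perfect matchings. The whole procedure fits in a single $O(m_H\log n_H)$ stage.
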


However, a naive application of Lemma~\ref{lem:augment-matching} to all vertices in $\tilde{S}$ would take $O(|\tilde{S}| \cdot n \log n)$ time, which is inefficient since $|\tilde{S}|$ can be as large as $\Theta(n)$. To address this, we observe that $\tilde{S}$ consists of $O(\sqrt{n})$ \emph{skeleton vertices} and a potentially large number of \emph{interior vertices}. Our strategy is to handle the skeleton vertices using Lemma~\ref{lem:augment-matching}, while the interior vertices are handled efficiently by exploiting their 1D nature and pre-computed optimal matchings.

\medskip 
The algorithm processes the subgraph $\tilde{G}'$ of  $\cand$ in each recursive step as follows. Using the subgraph of $G(P)$ induced by $V(P)\cap V(\tilde G')$,
we compute a balanced separator $S$ and a partition $(\tilde X,\tilde Y,\tilde S)$ of $\tilde G'$.
As mentioned earlier, we compute a minimum-cost perfect matching of $\tilde X\cup \tilde Y$ in linear time. Now consider the internal vertices of $\tilde S$.
We decompose them into subsets such that each subset consists of the internal vertices lying on the same segment of $P$. 
For each subset $W$, we retrieve its optimal matching of $W\cup \hat W$ using the pre-computed 1D matching from Section~\ref{sec:exact-algo-for-mcpm:candidate-construction}, where $\hat W$ is the copy of $W$ in the lower layer. 
An optimal matching of $W\cup\hat W$ can be obtained from a minimum-cost matching of $W$ with penalties, which can  be found in near-linear time.
Since no edge of $\tilde G'$ connects two vertices from different subsets, the union $M'$ of the all these matchings is a minimum-cost maximum-cardinality matching of $\tilde X\cup\tilde Y \cup \{ \text{internal vertices of }\tilde S\}$. Finally, we reintroduce the $O(\sqrt{n'})$  skeleton vertices of $\tilde S$ into the matching using Lemma~\ref{lem:augment-matching}, where $n'=|V(\tilde G')\cap V(P)|$. We iteratively apply Lemma~\ref{lem:augment-matching} for each skeleton vertex of $\tilde S$ to update $\tilde{M}'$. In this way, each recursion step takes $\tilde{O}(|V(\tilde G')|\sqrt {n'})$ time in total. 

In the base case that the number of (skeleton) vertices of $V(P)\cap V(\tilde G')$ is $O(1)$, 
we compute a minimum-cost maximum-cardinality matching of 
the (non-skeleton) vertices of $V(\tilde G')\setminus V(P)$ using their 1D nature in $\tilde O(|V(\tilde G')|)$ time.
Then we reintroduce the $O(1)$ skeleton vertices with Lemma~\ref{lem:augment-matching}. This takes $\tilde O(|V(\tilde G')|)$ time in total. 

\subparagraph*{Complexity Analysis.}
To bound the overall time complexity, let $\eta = O(n)$ be the number of vertices of $V(P)$. At depth $d$ of the recursion, let $H_{d,1}, H_{d,2}, \dots$ denote the disjoint subgraphs, where each $H_{d,i}$ contains $m_{d,i}$ edges and $\eta_{d,i}$ skeleton vertices along with their copies. By Theorem~\ref{thm:planar-separator} and Observation~\ref{obs:balanced-boundaries}, the number of skeleton vertices shrinks by a factor of at least $2/3$ per level, yielding $\eta_{d,i} \le (2/3)^d \eta$. Furthermore, since the subgraphs at any depth are edge-disjoint, $\sum_i m_{d,i} \le |\tilde{E}_{\textsf{cand}}| = O(n)$.

In the conquer phase, merging the sub-solutions for $H_{d,i}$ requires $O(\sqrt{\eta_{d,i}})$ augmentations. Since each augmentation takes $O(m_{d,i} \log n)$ time by Lemma~\ref{lem:augment-matching}, the total time for depth $d$ is:
\begin{align*}
\sum_i O\left(\sqrt{\eta_{d,i}} \cdot m_{d,i} \log n\right) 
\le O\left(\sqrt{(2/3)^d \eta} \log n\right) \sum_i m_{d,i} 
\le O\left(\left(\sqrt{2/3}\right)^d \cdot n^{1.5} \log n\right).
\end{align*}
Because $\sqrt{2/3} < 1$, the time for each level decreases geometrically. The total running time of the divide-and-conquer algorithm is thus dominated by the root level, which is $O(n^{1.5} \log n)$. 

\subparagraph*{Overall complexity and proof of Theorem~\ref{thm:main}.}
We compute an approximate matching and its dual weights for the scaled graph with a suitable scaling factor in  $\tilde{O}(n^{1.5} \log \Delta)$ time. 
The candidate graph can be computed in $O(n\log n)$ time, and the divide-and-conquer algorithm takes $O(n^{1.5}\log n)$ time. Therefore, the total time complexity of our algorithm remains $\tilde{O}(n^{1.5} \log \Delta)$.

\section{Conclusion}
In this paper, we presented an exact $\tilde{O}(n^{1.5} \log \Delta)$-time algorithm for the many-to-many matching problem on planar point sets with integer coordinates.
Our approach relies on a reduction to perfect matching on a prism graph, a scaling framework, and a planar separator-based divide-and-conquer strategy.

Extending this framework to non-integer coordinates or higher dimensions ($d \ge 3$) poses fundamental challenges. 
In particular, our framework relies on Lemma~\ref{lem:ineq-for-crossing-seg}. In continuous domains, the additive error in the inequality in Lemma~\ref{lem:ineq-for-crossing-seg} is no longer bounded, and thus it seems hard to apply our framework to continuous domains. 
For higher dimensions, geometric graphs lack planarity, preventing the use of a balanced vertex separator of $O(\sqrt{n})$ size.
Moreover, in higher dimensions, maintaining the dynamic weighted nearest-neighbor data structures required in the scaling step becomes substantially more difficult.
We leave the problem of designing a subquadratic-time algorithm for many-to-many matching on higher dimensions or continuous domains as an open problem.



\bibliography{main-bib}

\newpage

\appendix

\section{Algorithm for Computing an Eligible Edge Decomposition}
\label{sec:eligible}
Let $H$ be the upper layer of $\tilde G$, which is a geometric complete bipartite graph. 
Recall that an edge $(u,v)$ of $H$ satisfying the following is called an eligible edge:  
\begin{align*}
y_{\theta}(u) + y_{\theta}(v) > \|u- v\| - \frac{2}{\Delta^{33}}. 
\end{align*}

The overall construction of the candidate subgraph proceeds in two main steps:
(i) we generate a set $\mathcal P$ of point-line pairs to identify the lines supporting the eligible edges, and 
(ii) we utilize $\mathcal P$ to construct the eligible edge decomposition $P$.
Note that the definition of eligible edges here is identical to the one in~\cite{DBLP:conf/compgeom/Sharathkumar13}. Therefore, the combinatorial properties used in~\cite{DBLP:conf/compgeom/Sharathkumar13} still hold in this case while we cannot use an approximate perfect matching obtained from the first phase to construct $P$.

\subparagraph*{Computation of point-line pairs.}
Let $v \in R\cup B$ be a point and $\ell$ be a line. We say that $v$ is \emph{active} with respect to $\ell$ if there exists a point $u \in R\cup B$ such that the edge $(v, u)$ is in $E_{\textsf{elig}}$ and the segment $\overline{vu}$ is contained in $\ell$. Our goal is to compute the set $\mathcal P$ of all point-line pairs $(v, \ell)$ where $v$ is active with respect to $\ell$.

For this, we maintain a dynamic set $W$ of vertices, initialized as $R \cup B$, with weights $y_\theta(\cdot)$, and maintain a dynamic weighted nearest neighbor data structure~\cite{DBLP:journals/dcg/KaplanMRSS20} on $W$. Given a query vertex $v \in R \cup B$, we find $u = \arg\min_{w \in W} (\|v-w\| - y_\theta(w))$. Due to the feasibility of the dual weights, $(v, u) \in E_{\textsf{elig}}$ if there exists any eligible edge incident to $v$. To retrieve additional eligible edges, one could remove $u$ from $W$ and repeat the query. However, even if $|\mathcal{P}|$ is small, a naive approach of retrieving every eligible edge requires $\Omega(n^2)$ time because many edges incident to $v$ may lie on the same line $\ell$, causing $(v, \ell)$ to be identified redundantly.

To avoid this, we follow the observation by Sharathkumar \cite{DBLP:conf/compgeom/Sharathkumar13} that the number of eligible edges incident to \emph{external vertices} is $O(n)$. Here, a vertex $b \in R \cup B$ is \emph{external} if it is active with respect to at least two distinct lines. The algorithm proceeds as follows: (i) For each $v \in R \cup B$, compute an arbitrary line $\ell(v)$ such that $v$ is active with respect to $\ell(v)$, and (ii) for each $v \in R \cup B$, compute all eligible edges $(u, v)$ such that $\ell(u) \neq \ell(v)$. 
The total number of eligible edges found in Step (ii) is $O(n)$ due to the property of external vertices.
Step (i) is performed using the eligible oracle in $O(n \polylog n)$ time.\footnote{In~\cite{DBLP:conf/compgeom/Sharathkumar13}, they perform this in $O(n \log n)$ time using a perfect matching from the first phase. Since our first phase does not yield a perfect matching, we use the oracle instead. This remains efficient as it is subsumed by the time for Step (ii).} Step (ii) can be performed as follows. We partition $R \cup B$ into subsets $X_\ell$ such that all vertices in each $X_\ell$ share the same line $\ell(\cdot)$. For each subset $X_\ell$, we perform Step (ii) by temporarily removing all points of $X_\ell$ from $W$ and repeatedly querying the oracle for each $v \in X_\ell$ until it returns an edge that is no longer eligible. Since we only search for edges $(v, u)$ where $u \notin X_\ell$ (meaning $\ell(u) \neq \ell(v)$), the number of such edges is $O(n)$, and the total time for all subsets is $O(n \polylog n)$. Thus, $\mathcal{P}$ can be computed in $O(n \polylog n)$ time.

\subparagraph*{Construction of an eligible edge decomposition.}
Given the set of point-line pairs $\mathcal P$, we obtain the eligible edge decomposition $P$. 
Let $\mathcal L$ be the set of lines appearing in point-line pairs in $\mathcal P$.
Since the arrangement of lines of $\mathcal P$ 
has complexity $\Omega(n^2)$, the strategy of~\cite{DBLP:conf/compgeom/Sharathkumar13} is the following: For each line $\ell\in \mathcal L$, compute the union $U(\ell)$ of all eligible edges (segments) contained in $\ell$. Here, $U(\ell)$ consists of several disjoint segments on $\ell$. Then the total complexity of $U(\cdot)$ for all lines in $\mathcal L$ is $O(n)$, and for any two lines $\ell,\ell'$ in $\mathcal L$, we have no two segments from $U(\ell)$ and from $U(\ell')$ cross by Lemma~\ref{lem:ineq-for-crossing-seg}. Therefore, this yields the eligible edge decomposition with the desired properties.

Thus it suffices to show how to compute $U(\ell)$ for a fixed line $\ell\in \mathcal L$. First, 
we partition $\ell$ into segments with respect to the points $v$ with $(v,\ell)\in\mathcal P$. 
For each segment $s$ on $\ell$, we construct a complete bipartite graph $G_s$ with vertex set $(R\cap s, B\cap s)$ where the cost of an edge $(u, v)$ is defined as $\|u - v\| - y_{\theta}(u) - y_{\theta}(v)$. 
Let $F_s$ be the forest obtained from a minimum spanning tree of $G_s$ by removing all edges that do not satisfy the eligibility condition (\ref{ineq:elig-condition}). Since all vertices of $G_s$ lie on the line and the edge costs are defined with respect to the Euclidean distance, we can compute its minimum spanning tree in $O(n\log n)$ time.
By construction, each tree of $F_s$ spans a geometric interval of $s$, and it is contained in the same component (segment) of $U(s)$. But it is possible that two geometric intervals of $s$ spanned by two distinct trees of $F_s$ might be contained in the same component of $U(s)$. To avoid this, we merge the trees in $F_s$ if the geometric intervals spanned by their edges overlap. Since the number of subtrees is $O(n)$, this can be done in $O(n\log n)$ time in total.

In this way, an eligible edge decomposition $P$ of the upper layer of $\tilde G$ can be computed in $O(n\polylog n)$ time in total.
For details, refer to~\cite{DBLP:conf/compgeom/Sharathkumar13}.

\section{Scaling Algorithm for 1-Optimal Matching}
\label{sec:scaling-algo-for-1-opt}
In this section, we present an algorithm to compute a 1-optimal matching $\tilde{M}^*_1$ and the corresponding dual weights $y(\cdot)$ of the $\theta$-scaled graph $\tilde{G}_{\theta}$ of $\tilde{G}$, given an error parameter $\varepsilon > 0$ and a scaling factor $\theta$ satisfying $\theta \le \varepsilon / (3n)$.
Our approach is essentially the same as that of Bandyapadhyay et al.~\cite{DBLP:conf/isaac/BandyapadhyayMS21}, though we have slightly modified the description of the algorithm for consistency.
We establish the following theorem, which is the main result of this section.
\begin{theorem}\label{thm:sub}
For any $\varepsilon > 0$ and a scaling factor $\theta$ satisfying $\theta \leq \varepsilon / (3n)$, a 1-optimal matching $\tilde{M}^*_1$ and the corresponding dual weights $y(\cdot)$ of $\tilde{G}_{\theta}$ can be computed in $\tilde O\!\left(n^{1.5}\, \log {(\Delta / \varepsilon)}\right)$ time.
\end{theorem}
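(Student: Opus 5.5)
We will follow the Gabow--Tarjan scaling framework, as adapted by Sharathkumar and Agarwal and by Bandyapadhyay et al. The plan is to compute a 1-optimal matching of $\tilde G_\theta$ by running $O(\log(n\Delta/\theta)) = O(\log(\Delta/\varepsilon)+\log n)$ scales. In scale $i$ we use the cost $\tilde c_i(e)=\lceil \tilde c(e)/\theta_i\rceil$ with $\theta_i = \theta\cdot 2^{L-i}$, where $L=\lceil\log_2(\sqrt2\Delta/\theta)\rceil$. At scale $0$ every cost is at most $1$, so the empty matching with $y\equiv 0$ is 1-feasible. The last scale uses exactly $\tilde c_\theta$.

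Between scales we double the duals and subtract a small correction (for instance $y\leftarrow 2y-1$ on one side of the bipartition), which restores 1-feasibility of every edge under the refined costs $\tilde c_{i+1}\le 2\tilde c_i$. We then discard the matching. The standard Gabow--Tarjan argument shows that the total surplus of the duals, relative to an optimal matching of the new scale, is $O(n)$.

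Within a scale we alternate Hungarian-type searches and augmentation phases. First, a Dijkstra-like search over the admissible structure raises duals so that an augmenting path of admissible edges appears. Then a maximal set of vertex-disjoint admissible augmenting paths is found by DFS, and we augment along all of them. Along each path we lower the duals of one side by $1$, which keeps 1-feasibility and removes the new matching edges from admissibility. The standard counting shows that after $\sqrt n$ phases the dual sum has grown by $\Omega(\sqrt n)$ times the number of free vertices. Combined with the $O(n)$ surplus bound, at most $O(\sqrt n)$ free vertices remain, and each is then matched by a single Hungarian search and augmentation. So each scale needs $O(\sqrt n)$ searches.

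The main obstacle is implementing each search and each maximal-path phase in $\tilde O(n)$ time on the implicit $\Theta(n^2)$-edge prism graph $\tilde G$. We handle the three edge types separately.

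\textbf{Upper layer.} Since $\tilde c_i(u,v)-y(u)-y(v)$ equals $\lceil\|u-v\|/\theta_i\rceil$ minus additive weights, we can relax the ceiling to an additive $\pm1$ and query the dynamic additively weighted nearest-neighbor structure of Kaplan et al. This gives the next vertex to settle in Dijkstra, and it lets the DFS find an admissible neighbor, both in polylogarithmic amortized time with deletions of visited vertices. The ceiling mismatch needs care, exactly as in Sharathkumar's 1-feasibility with a $+1$ slack.

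\textbf{Lower layer.} All costs are $0$, so the lower layer is a complete bipartite graph with uniform cost. The minimum reduced-cost neighbor of a vertex is then the one with maximum dual on the opposite side, which a heap or balanced BST maintains.

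\textbf{Link edges.} These are only $n$ explicit edges and are handled directly.

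Each search and phase then visits each vertex $O(1)$ times with polylog work. This gives $\tilde O(n)$ per phase, $\tilde O(n^{1.5})$ per scale, and $\tilde O(n^{1.5}\log(\Delta/\varepsilon))$ overall. The condition $\theta\le\varepsilon/(3n)$ only guarantees, via Sharathkumar--Agarwal, that the output is within additive $\varepsilon$ of optimal.
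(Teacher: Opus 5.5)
Your plan follows the same route as the paper. It uses Gabow--Tarjan scaling with the matching discarded at each scale, then $O(\sqrt n)$ rounds of Hungarian search plus a DFS for a maximal set of vertex-disjoint admissible augmenting paths. The edge classes are handled the same way: the Kaplan et al.\ additively weighted structure for the upper layer, max-heaps on duals for the zero-cost lower layer, and explicit treatment of the link edges. Two details need fixing, though.

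\textbf{Rescaling the duals.} Since $\tilde c_{i+1}(e)=\lceil 2\tilde c(e)/\theta_i\rceil\ge 2\tilde c_i(e)-1$, doubling $y(u)+y(v)\le\tilde c_i(e)+1$ only gives $2y(u)+2y(v)\le\tilde c_{i+1}(e)+3$. So every edge needs a total decrease of $2$. Your example ``$y\leftarrow 2y-1$ on one side'' subtracts only $1$ per edge and leaves $y(u)+y(v)\le\tilde c_{i+1}(e)+2$. That violates 1-feasibility for edges that were admissible and have $\tilde c_{i+1}(e)=2\tilde c_i(e)-1$. The paper applies $y\leftarrow 2y-1$ to \emph{all} vertices; subtracting $2$ on one side would also work.

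\textbf{The ceiling in the upper layer.} This is not a $\pm1$ issue, and it must not be treated as one. The Hungarian search needs the exact minimum slack $\alpha$, since overshooting by one breaks 1-feasibility. The DFS must also distinguish slack $0$ from slack $1$. So an approximate minimizer does not suffice.

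The missing observation is integrality. All duals, the offset $\omega$, and hence the static weights $\sigma$ are integers. Therefore $\lceil\|r-b\|/\theta\rceil-\sigma_r-\sigma_b=\lceil(\|r-b\|-\theta\sigma_r-\theta\sigma_b)/\theta\rceil$. Because the ceiling is monotone, the exact additively weighted nearest neighbor (or closest pair) with weights $-\theta\sigma$ is an exact minimizer of the scaled slack. This is the paper's Lemma~\ref{lem:scaled-argmin}, and it removes the ``mismatch'' you flagged. With these two corrections your argument matches the paper's.

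\textbf{Scale count.} A minor point, also implicit in the paper: your identity $O(\log(n\Delta/\theta))=O(\log(\Delta/\varepsilon)+\log n)$ needs $\theta$ to be within a polynomial factor of $\varepsilon/n$, as in the application where $\theta=\varepsilon/(3n)$. For arbitrarily small $\theta$, the number of scales is $\log(\Delta/\theta)$.
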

Recall that $\tilde{G}_{\theta}$ denotes the graph $\tilde{G}$ equipped with the scaled edge costs $\tilde{c}_{\theta}(e) = \lceil \tilde{c}(e)/\theta \rceil$ for all $e \in \tilde{E}$.
We detail the scaling algorithm for computing a 1-optimal matching in Section~\ref{sec:scaling-algo-for-1-opt:algorithm}.
Subsequently, in Section~\ref{sec:scaling-algo-for-1-opt:data-structures}, we describe the data structure $\mathcal{D}$ required to support the search procedures efficiently and conclude with the proof of Theorem \ref{thm:sub}.

\subparagraph*{Notations for the prism graph.}
Let $R_0 = R$ and $B_0 = B$.
We introduce two sets of vertices: $R_1$, which contains a copy $\hat{b}$ for each vertex $b \in B_0$, and $B_1$, which contains a copy $\hat{r}$ for each vertex $r \in R_0$.
The vertex set of $\tilde{G}$ is defined as $\tilde{V} = \tilde{R} \cup \tilde{B}$, where $\tilde{R} = R_0 \cup R_1$ and $\tilde{B} = B_0 \cup B_1$.
The edge set $\tilde{E}$ is defined as the union of four disjoint sets, $\tilde{E} = \bigcup_{i = 0}^{3} E_i$, defined as follows:
\begin{align*}
E_0 &= R_0 \times B_0, \text{ } E_1 = R_1 \times B_1, \\
E_2 &= \{(r, \hat{r}) \mid r \in R_0, \, \hat{r} \in B_1 \text{ is the copy of } r\}, \\
E_3 &= \{(\hat{b}, b) \mid b \in B_0, \, \hat{b} \in R_1 \text{ is the copy of } b\}.
\end{align*}
We refer to the edges in $E_2 \cup E_3$ as \emph{link edges}.
For each edge $e = (r, b) \in E_0$, let $e'$ be the edge $(\hat{b}, \hat{r}) \in E_1$, where $\hat{b}$ and $\hat{r}$ are the copies of $b$ and $r$, respectively.
We refer to $e'$ as the \emph{copied edge} of $e$.
Similarly, for any subset $E' \subseteq E_0$, we define the \emph{copied set} of $E'$ as the set of copied edges corresponding to all edges in $E'$.
\subsection{Scaling Algorithm}
\label{sec:scaling-algo-for-1-opt:algorithm}
We describe the algorithm to compute a 1-optimal matching $\tilde{M}^*_1$ and the associated dual weights $y(\cdot)$ of $\tilde{G}_{\theta}$.
The total running time is $\tilde O \bigl(n^{1.5} \Phi(n) \log(n\Delta / \varepsilon)\bigr)$, where $\Phi(n)$ denotes the query and update time of a dynamic weighted nearest neighbor data structure \cite{DBLP:conf/soda/SharathkumarA12}.

\subparagraph*{Overall procedure.}
We initialize the scaling factor $\theta$ based on the diameter of the graph and set $y(v) = 0$ for all $v \in \tilde{V}$.
The algorithm proceeds in phases, iterating until $\theta \le \varepsilon / (3n)$.
In each phase, we update $\theta \leftarrow \theta/2$ and adjust the dual weights as $y(v) \leftarrow 2y(v) - 1$ for all $v \in \tilde{V}$.
With the updated parameters, we invoke the \textsc{1-OptimalMatch} procedure to compute a 1-optimal matching for $\tilde{G}_{\theta}$.

The \textsc{1-OptimalMatch} procedure initializes $\tilde{M} = \emptyset$ and iteratively expands it to a perfect matching while maintaining the 1-feasibility of $y(\cdot)$.
This is achieved by repeating the following three steps:
\begin{enumerate}
\item \textbf{Hungarian search.} Perform a Hungarian search to adjust dual weights and uncover new admissible edges.
\item \textbf{Depth-first search.} Identify a maximal set $\mathcal{P}$ of vertex-disjoint augmenting paths in the admissible graph using a depth-first search.
\item \textbf{Augmentation.} Augment $\tilde{M}$ along the augmenting paths in $\mathcal{P}$.
\end{enumerate}
We detail the search procedures below.

\subparagraph*{Hungarian search.}
The primary objective of the Hungarian search is to adjust the dual weights to create new admissible edges until at least one augmenting path is found.
To achieve this, the search maintains an \emph{alternating forest} $\mathcal{F}$, which is defined as a collection of trees where each tree is rooted at a free vertex in $\tilde{B}$, and every path extending from a root is an alternating path with respect to $\tilde{M}$.
Let $\tilde{R}_{\mathcal{F}}$ and $\tilde{B}_{\mathcal{F}}$ denote the sets of vertices in $\tilde{R}$ and $\tilde{B}$ currently contained in the forest $\mathcal{F}$, respectively.
Initially, $\tilde{R}_{\mathcal{F}} = \emptyset$, and $\tilde{B}_{\mathcal{F}}$ contains all free vertices of $\tilde{B}$.

In each step, we compute the \emph{minimum slack} $\alpha$ over all edges connecting $\tilde{B}_{\mathcal{F}}$ to $\tilde{R} \setminus \tilde{R}_{\mathcal{F}}$ defined as 
\begin{align}
\alpha \;=\; \min_{b \in \tilde{B}_{\mathcal{F}},\, r \in \tilde{R} \setminus \tilde{R}_{\mathcal{F}}} \left\{ \tilde{c}_{\theta}(r, b) + 1 - y(r) - y(b) \right\}. \label{eq:alpha}
\end{align}
If $\alpha > 0$, we perform a dual update: $y(b) \leftarrow y(b) + \alpha$ for all $b \in \tilde{B}_{\mathcal{F}}$ and $y(r) \leftarrow y(r) - \alpha$ for all $r \in \tilde{R}_{\mathcal{F}}$.
This update maintains 1-feasibility while reducing the slack of edges between $\tilde{B}_{\mathcal{F}}$ and $\tilde{R} \setminus \tilde{R}_{\mathcal{F}}$ by exactly $\alpha$.
To avoid the explicit update of $O(n)$ dual variables in each step, we employ the implicit update technique introduced by Vaidya \cite{DBLP:journals/siamcomp/Vaidya89a}.
We introduce a \emph{global offset} $\omega$, initialized to $0$ at the beginning of the search.
Additionally, each vertex $v \in \tilde{V}$ is assigned a static weight $\sigma_v$.
Initially, we set $\sigma_v = y(v)$ for all $v \in \tilde{V}$.
When a vertex $r$ is added to $\tilde{R}_{\mathcal{F}}$, we set its static weight to $\sigma_r = y(r) + \omega$.
When a vertex $b$ is added to $\tilde{B}_{\mathcal{F}}$, we set its static weight to $\sigma_b = y(b) - \omega$.
When a dual update is required, i.e., $\alpha > 0$, we simply increment the global offset by updating $\omega \leftarrow \omega + \alpha$.
This mechanism leads to the following structural property of the dual weights.
\begin{observation}[\cite{DBLP:journals/siamcomp/Vaidya89a}]
\label{obs:implicit-dual}
At any point during the search, the dual weight $y(v)$ of a vertex in $v \in \tilde{V}$ is implicitly maintained as $y(v) = \sigma_v - \omega$ if $v \in \tilde{R}_{\mathcal{F}}$, $y(v) = \sigma_v + \omega$ if $v \in \tilde{B}_{\mathcal{F}}$, and $y(v) = \sigma_v$ otherwise.
\end{observation}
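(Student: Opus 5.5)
We prove Observation~\ref{obs:implicit-dual} by induction over the events of a single Hungarian search, tracking for each vertex $v$ the quantity $\sigma_v \mp \omega$ and checking that it always equals the true dual weight $y(v)$. The relevant events are: (a) the initialization at the start of the search; (b) a dual update with $\alpha>0$, i.e.\ $\omega \leftarrow \omega+\alpha$; (c) insertion of a vertex into $\tilde R_{\mathcal F}$; and (d) insertion of a vertex into $\tilde B_{\mathcal F}$. Vertices are never removed from $\mathcal F$ during a single search; the forest is discarded only after the search ends. Each vertex therefore changes category (outside $\mathcal F$, in $\tilde R_{\mathcal F}$, in $\tilde B_{\mathcal F}$) at most once, and the invariant only has to be checked at these transitions and at offset increments.

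\emph{Base case.} Initially $\omega=0$ and $\sigma_v=y(v)$ for all $v$. Every case of the claimed formula then gives $y(v)=\sigma_v$, which is correct. This covers the free vertices of $\tilde B$ that start in $\tilde B_{\mathcal F}$: for them $\sigma_v+\omega=\sigma_v=y(v)$.

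\emph{Offset increments.} The explicit update rule increases $y(b)$ by $\alpha$ for $b\in\tilde B_{\mathcal F}$ and decreases $y(r)$ by $\alpha$ for $r\in\tilde R_{\mathcal F}$. All other vertices are unchanged. Incrementing $\omega$ by $\alpha$ with static weights fixed reproduces exactly these changes: $\sigma_b+\omega$ grows by $\alpha$, $\sigma_r-\omega$ shrinks by $\alpha$, and vertices outside $\mathcal F$ do not depend on $\omega$.

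\emph{Insertions.} When $r$ enters $\tilde R_{\mathcal F}$, its current true weight $y(r)$ equals $\sigma_r$ by the induction hypothesis, since $r$ was outside the forest. Resetting $\sigma_r := y(r)+\omega$ gives $\sigma_r-\omega=y(r)$ at that moment, so the new formula holds. The case $b$ entering $\tilde B_{\mathcal F}$ with $\sigma_b := y(b)-\omega$ is symmetric.

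The only delicate point, more bookkeeping than obstacle, is confirming that no other operation touches $y$ during the search. In particular, augmentation happens only after the search ends, at which point the true values $\sigma_v\mp\omega$ are written back before $\omega$ is reset. With that confirmed, the case analysis above is exhaustive and the observation follows.
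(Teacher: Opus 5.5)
Your proof is correct and is the invariant check the paper leaves implicit: the paper gives no separate argument, only the described bookkeeping (static weights set on entry to the forest, $\omega$ incremented on each dual update) and a citation to Vaidya. Your induction over initialization, offset increments, and insertions into $\tilde R_{\mathcal F}$ and $\tilde B_{\mathcal F}$ makes that verification explicit. Your remark about writing back $\sigma_v \mp \omega$ after the search is a harmless implementation assumption that lies outside the observation's scope.
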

Based on Observation \ref{obs:implicit-dual}, the minimum slack $\alpha$ in \eqref{eq:alpha} can be equivalently rewritten using the static weights as 
\begin{align}
\alpha \;=\; \min_{b \in \tilde{B}_{\mathcal{F}},\, r \in \tilde{R} \setminus \tilde{R}_{\mathcal{F}}} \left\{ \tilde{c}_{\theta}(r, b) + 1 - \sigma_r - \sigma_b \right\} - \omega. \label{eq:alpha-implicit}
\end{align}
This allows us to implicitly track and update the dual weights in $O(1)$ time by merely updating $\omega$, effectively reducing the Hungarian search to the task of maintaining the bichromatic closest pair between $\tilde{B}_{\mathcal{F}}$ and $\tilde{R} \setminus \tilde{R}_{\mathcal{F}}$ with respect to the static distance function $\tilde{c}_{\theta}(r, b) + 1 - \sigma_r - \sigma_b$.
The search proceeds until $\alpha = 0$, at which point an admissible edge $(r, b)$ achieving the minimum in \eqref{eq:alpha-implicit} is identified.
If $r$ is free, an augmenting path has been found, and the Hungarian search terminates.
If $r$ is matched to a vertex $b' \in \tilde{B}$ in $\tilde{M}$, we extend $\mathcal{F}$ by adding $r$ to $\tilde{R}_{\mathcal{F}}$ and $b'$ to $\tilde{B}_{\mathcal{F}}$, and the search iterations continue.

\subparagraph*{Depth-first search.}
The depth-first search procedure identifies a maximal set $\mathcal{P}$ of vertex-disjoint augmenting paths.
We initialize $\mathcal{P} = \emptyset$ and mark all vertices as unvisited.
For each unvisited free vertex $b \in \tilde{B}$, we initiate an alternating path $\Gamma$ starting at $b$ to explore the admissible graph of $\tilde{G}_{\theta}$.

Let $u \in \tilde{B}$ denote the current endpoint of $\Gamma$, and let $U_{\tilde{R}} \subseteq \tilde{R}$ be the set of currently unvisited vertices in $\tilde{R}$.
To find a valid extension for $\Gamma$, we compute the \emph{local minimum slack} $\beta(u)$ with respect to $u$, defined as
\begin{align}
\beta(u) \;=\; \min_{v \in U_{\tilde{R}}} \left\{ \tilde{c}_{\theta}(v, u) + 1 - y(v) - y(u) \right\}. \label{eq:beta}
\end{align}
We evaluate $\beta(u)$ to determine the subsequent step in the search:
\begin{itemize}
\item \textbf{Case 1. $\beta(u) = 0$.} There exists an unvisited vertex $v \in U_{\tilde{R}}$ achieving (\ref{eq:beta}), meaning $(v, u)$ is an admissible edge. We mark $v$ as visited and extend $\Gamma$ by adding $(v, u)$.
If $v$ is free, $\Gamma$ constitutes a valid augmenting path, so we add $\Gamma$ to $\mathcal{P}$ and terminate the search initiated from $b$.
If $v$ is matched to some $u' \in \tilde{B}$ in $\tilde{M}$, we mark $u'$ as visited, further extend $\Gamma$ with $(v, u')$, and continue the search from $u'$.
    
\item \textbf{Case 2. $\beta(u) > 0$.} There is no admissible edge connecting $u$ to any unvisited vertex in $\tilde{R}$. We backtrack by removing $u$ and its matched vertex in $\tilde{R}$, if any, from $\Gamma$, and continue the search from the preceding vertex in $\tilde{B}$ on $\Gamma$.
\end{itemize}
This process repeats until $\Gamma$ becomes empty, at which point we proceed to the next unvisited free vertex in $\tilde{B}$.
Maintaining the visited status of vertices across the searches ensures that each vertex is explored at most once, thereby preventing redundant traversals and bounding the overall time complexity of the procedure.

The correctness of our scaling algorithm follows directly from the analysis by \cite{DBLP:conf/soda/SharathkumarA12}.
To bound the time complexity, let $\Phi(n)$ denote the maximum time required by the underlying data structure $\mathcal{D}$ to compute the global minimum slack $\alpha$ in (\ref{eq:alpha-implicit}) or the local minimum slack $\beta(u)$ in (\ref{eq:beta}).
For a fixed scaling factor $\theta$, the \textsc{1-OptimalMatch} procedure performs $O(\sqrt{n})$ iterations, which cumulatively require $O(n^{1.5})$ data structure operations, taking $O(n^{1.5} \Phi(n))$ time per scaling phase \cite{DBLP:conf/soda/SharathkumarA12}.
Since the scaling factor is repeatedly halved from an initial value proportional to the diameter $\Delta$ down to $\varepsilon / (3n)$, there are $O(\log(n \Delta / \varepsilon))$ phases in total \cite{DBLP:journals/siamcomp/GabowT89}.
Consequently, the overall running time is bounded by $O\bigl(n^{1.5} \Phi(n) \log(n\Delta / \varepsilon)\bigr)$.
In Section \ref{sec:scaling-algo-for-1-opt:data-structures}, we will show that $\mathcal{D}$ supports these queries and updates in $\Phi(n) = \text{polylog}(n)$ time, which yields a total running time of $\tilde O(n^{1.5} \log (\Delta / \varepsilon))$ and completes the proof of Theorem~\ref{thm:sub}.

\subsection{Data Structures}
\label{sec:scaling-algo-for-1-opt:data-structures}
In this section, we detail the data structures that support the queries for the Hungarian search and depth-first search in $\Phi(n) = \text{polylog}(n)$ time.
Following the approach of Bandyapadhyay et al.~\cite{DBLP:conf/isaac/BandyapadhyayMS21}, we exploit the distinct geometric and structural properties of the edge sets $E_0, E_1, E_2$, and $E_3$ by maintaining separate data structure components for each.

To handle the edges in $E_0$ efficiently, we must relate the scaled cost $\tilde{c}_{\theta}(\cdot, \cdot)$ to the original Euclidean distance.
The following lemma establishes that minimizing the scaled slack is equivalent to minimizing an additively weighted distance \cite{DBLP:conf/soda/SharathkumarA12}.

\begin{lemma}[\cite{DBLP:conf/soda/SharathkumarA12}]
\label{lem:scaled-argmin}
Let $R' \subseteq R_0$ and $B' \subseteq B_0$. 
If $(r^*, b^*) = \arg\min_{r \in R', b \in B'} \left\{ \|r - b\| - \theta \sigma_r - \theta \sigma_b \right\}$, then $(r^*, b^*)$ minimizes $\tilde{c}_{\theta}(r, b) + 1 - \sigma_r - \sigma_b$ among all pairs in $R' \times B'$.
\end{lemma}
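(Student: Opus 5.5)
We want: if $(r^*,b^*)$ minimizes $g(r,b)=\|r-b\|-\theta\sigma_r-\theta\sigma_b$ over $R'\times B'$, then it also minimizes $h(r,b)=\tilde c_\theta(r,b)+1-\sigma_r-\sigma_b$ over the same set. The plan is to show that $h$ is, up to the common factor $1/\theta$, a monotone rounding of $g$.

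First I rewrite $h$ using $\lceil x-z\rceil=\lceil x\rceil-z$ for integer $z$. The static weights $\sigma_v$ are integers: dual weights start at $0$, the scaling step $y\leftarrow 2y-1$ preserves integrality, and every Hungarian-search update $\alpha$ is a difference of integer-valued slacks, so $\omega$ is an integer too. Since $\tilde c_\theta(r,b)=\lceil \|r-b\|/\theta\rceil$ and $\sigma_r+\sigma_b\in\mathbb Z$, we get
\[
h(r,b)=\Bigl\lceil \frac{\|r-b\|}{\theta}-\sigma_r-\sigma_b\Bigr\rceil+1=\Bigl\lceil \frac{g(r,b)}{\theta}\Bigr\rceil+1 .
\]

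Now the map $x\mapsto\lceil x/\theta\rceil+1$ is non-decreasing because $\theta>0$. Hence $g(r^*,b^*)\le g(r,b)$ for every $(r,b)\in R'\times B'$ implies $h(r^*,b^*)\le h(r,b)$. So $(r^*,b^*)$ minimizes $h$. It need not be the unique minimizer, since ties can be created by rounding, but the lemma only claims that it is a minimizer.

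The main point to check is the integrality of $\sigma_r+\sigma_b$. That is the only place the argument uses anything beyond monotonicity, and it rests on the integrality of $\alpha$ and $\omega$ noted above. If integrality failed, the ceiling would not commute with the shift and the rounding could reorder pairs. So I would state this invariant explicitly, checking it through the phase initialization, the Hungarian search, and the static-weight assignments $\sigma_r=y(r)+\omega$ and $\sigma_b=y(b)-\omega$.
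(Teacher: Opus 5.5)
Your proposal is correct and follows the paper's proof: you pull the integer $\sigma_r+\sigma_b$ inside the ceiling to get $\lceil g(r,b)/\theta\rceil+1$, then conclude by monotonicity of $x\mapsto\lceil x/\theta\rceil+1$. Your check that the static weights stay integral (integer slacks $\alpha$, hence integer $\omega$, plus the rescaling $y\leftarrow 2y-1$) fills in a step the paper only asserts.
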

\begin{proof}
By definition, the scaled edge cost is $\tilde{c}_{\theta}(r, b) = \lceil \|r - b\| / \theta \rceil$ for an edge $(r, b) \in E_0$.
Since $\sigma_r$ and $\sigma_b$ represent dual weights and global offsets, they are integers throughout the algorithm.
This allows us to bring them inside the ceiling function:
\begin{align*}
\min_{r \in R', b \in B'} \left\{ \tilde{c}_{\theta}(r, b) + 1 - \sigma_r - \sigma_b \right\} &= \min_{r \in R', b \in B'} \left\{ \left\lceil \frac{\|r - b\|}{\theta} \right\rceil - \sigma_r - \sigma_b \right\} + 1 \\
&= \min_{r \in R', b \in B'} \left\lceil \frac{\|r - b\| - \theta \sigma_r - \theta \sigma_b}{\theta} \right\rceil + 1.
\end{align*}
Because dividing by $\theta$, taking the ceiling, and adding a constant are all monotonic operations, the pair $(r^*, b^*)$ that achieves the minimum for $\|r - b\| - \theta \sigma_r - \theta \sigma_b$ also minimizes the value above.
\end{proof}

\subparagraph*{Data structure for the Hungarian search.}
For the Hungarian search, the data structure $\mathcal{D}$ consists of five components: $D$, $H_{R_1}$, $H_{B_1}$, $H_2$, and $H_3$.
Each component is designed to efficiently evaluate the static distance function for its respective edge set.
\begin{itemize}
\item \textbf{$D$ for $E_0$.} 
We employ the dynamic additively weighted nearest neighbor data structure of Kaplan et al. \cite{DBLP:journals/dcg/KaplanMRSS20} as $D$. 
It maintains the points $r \in R_0 \setminus \tilde{R}_{\mathcal{F}}$ with an additive weight of $-\theta \sigma_r$, and $b \in B_0 \cap \tilde{B}_{\mathcal{F}}$ with an additive weight of $-\theta \sigma_b$.
By Lemma \ref{lem:scaled-argmin}, a nearest neighbor query on $D$ directly yields the edge in $E_0$ that minimizes the slack.

\item \textbf{$H_{R_1}$ and $H_{B_1}$ for $E_1$.} 
Since all edges in $E_1 = R_1 \times B_1$ have zero cost, minimizing the slack $\tilde{c}_\theta(r, b) + 1 - \sigma_r - \sigma_b$ is equivalent to maximizing $\sigma_r + \sigma_b$. 
Thus, we maintain two max-heaps, $H_{R_1}$ and $H_{B_1}$, storing the keys $\sigma_r$ for $r \in R_1 \setminus \tilde{R}_{\mathcal{F}}$ and $\sigma_b$ for $b \in B_1 \cap \tilde{B}_{\mathcal{F}}$, respectively.

\item \textbf{$H_2$ and $H_3$ for $E_2 \cup E_3$.}
We maintain a min-heap $H_2$ for the link edges $(r, \hat{r}) \in E_2$ with keys $\tilde{c}_{\theta}(r, \hat{r}) + 1 - \sigma_r - \sigma_{\hat{r}}$, and a min-heap $H_3$ for the link edges $(\hat{b}, b) \in E_3$ with keys $\tilde{c}_{\theta}(\hat{b}, b) + 1 - \sigma_{\hat{b}} - \sigma_b$. 
Only the edges currently connecting $\tilde{B}_{\mathcal{F}}$ to $\tilde{R} \setminus \tilde{R}_{\mathcal{F}}$ are stored in these heaps.
\end{itemize}

\subparagraph*{Data structure for the depth-first search.}
For the depth-first search, we maintain a separate data structure, denoted as $\mathcal{D}'$.
It follows the same structure as $\mathcal{D}$ and consists of two components, $D'$ and $H'_{R_1}$, each corresponding to its counterpart in $\mathcal{D}$.
Specifically, instead of tracking vertices based on their membership in the alternating forest $\mathcal{F}$, $\mathcal{D}'$ dynamically maintains the partitions of currently visited and unvisited vertices to support the search of admissible edges.

We now explain how the search procedures are supported.
During the Hungarian search, we compute the global minimum slack $\alpha$ defined in (\ref{eq:alpha-implicit}) by retrieving the minimum candidate from each data structure in $\mathcal{D}$.
The global minimum among these four candidates, offset by $-\omega$, yields the exact value of $\alpha$.
During the depth-first search, we efficiently compute the local minimum slack $\beta(u)$ defined in (\ref{eq:beta}) for a given vertex $u \in \tilde{B}$ by evaluating its incident edges using $\mathcal{D}'$.
\begin{itemize}
\item \textbf{Case 1. $u \in B_0$.} 
The candidates for $u$ are either edges to $R_0$ or the link edge to its copy $\hat{u} \in R_1$. 
We perform exactly one query on $D'$ to find the minimum slack among edges to unvisited vertices in $R_0$. 
We also evaluate the slack of its specific link edge to $\hat{u}$, provided that $\hat{u}$ is unvisited. 
The value $\beta(u)$ is simply the minimum of these two values.

\item \textbf{Case 2. $u \in B_1$.}
The candidates for $u$ are either edges to $R_1$ or the link edge to its original vertex $v \in R_0$.
We perform exactly one query on the max-heap $H'_{R_1}$ to find an unvisited vertex in $R_1$ that minimizes the slack in $E_1$.
Similarly, we explicitly evaluate the slack of its link edge to $v$, provided that $v$ is unvisited.
The value $\beta(u)$ is the minimum of these two values.
\end{itemize}

We conclude by analyzing the operation time $\Phi(n)$ of the data structures $\mathcal{D}$ and $\mathcal{D}'$ to complete the proof of Theorem \ref{thm:sub}.
The components $D$ and $D'$ are implemented using the dynamic additively weighted nearest neighbor data structure by Kaplan et al. \cite{DBLP:journals/dcg/KaplanMRSS20}, which supports updates and queries in $\text{polylog}(n)$ time.
The remaining components are standard priority queues.
Updates and queries on these are performed in $O(\log n)$ time.
Since any query or update on $\mathcal{D}$ or $\mathcal{D}'$ during the search procedures involves a constant number of operations on these components, the operation time is dominated by $D$ and $D'$.
Consequently, we obtain $\Phi(n) = \polylog n$.

Substituting $\Phi(n) = \polylog n$ into the overall time complexity $O\bigl(n^{1.5} \Phi(n) \log(n\Delta / \varepsilon)\bigr)$ established in Section \ref{sec:scaling-algo-for-1-opt:algorithm}, the total running time of the scaling algorithm is bounded by $\tilde{O}\bigl(n^{1.5} \log(\Delta / \varepsilon)\bigr)$.
This is the claimed time complexity and completes the proof of Theorem~\ref{thm:sub}.

\end{document}